\newcommand{\Hmm}[1]{\leavevmode{\marginpar{\tiny%
$\hbox to 0mm{\hspace*{-0.5mm}$\leftarrow$\hss}%
\vcenter{\vrule depth 0.1mm height 0.1mm width \the\marginparwidth}%
\hbox to 0mm{\hss$\rightarrow$\hspace*{-0.5mm}}$\\\relax\raggedright #1}}}
\newcommand{\nc}{\newcommand}
\nc{\les}{\lesssim}
\nc{\nit}{\noindent}
\nc{\nn}{\nonumber}
\nc{\D}{\partial}
\nc{\diff}[2]{\frac{d #1}{d #2}}
\nc{\diffn}[3]{\frac{d^{#3} #1}{d {#2}^{#3}}}
\nc{\pdiff}[2]{\frac{\partial #1}{\partial #2}}
\nc{\pdiffn}[3]{\frac{\partial^{#3} #1}{\partial{#2}^{#3}}}
\nc{\abs}[1] {\lvert #1 \rvert}
\nc{\cAc}{{\cal A}_c}
\nc{\cE}{{\cal E}}
\nc{\cF}{{\cal F}}
\nc{\cP}{{\cal P}}
\nc{\cV}{{\cal V}}
\nc{\cQ}{{\cal Q}}
\nc{\cGin}{{\cal G}_{\rm in}}
\nc{\cGout}{{\cal G}_{\rm out}}
\nc{\cO}{{\cal O}}
\nc{\Lav}{{\cal L}_{\rm av}}
\nc{\cL}{{\cal L}}
\nc{\cB}{{\cal B}}
\nc{\cZ}{{\cal Z}}
\nc{\cR}{{\cal R}}
\nc{\cT}{{\cal T}}
\nc{\cY}{{\cal Y}}
\nc{\cX}{{\cal X}}
\nc{\cXT}{{{\cal X}(T)}}
\nc{\cBT}{{{\cal B}(T)}}
\nc{\vD}{{\vec \mathcal{D}}}
\nc{\efield}{\mathcal{E}}
\nc{\vE}{{\vec \efield}}
\nc{\vB}{{\vec \mathcal{B}}}
\nc{\vH}{{\vec \mathcal{H}}}
\nc{\ty}{{\tilde y}}
\nc{\tu}{{\tilde u}}
\nc{\tV}{{\tilde V}}
\nc{\Pc}{{\bf P_c}}
\nc{\bx}{{\bf x}}
\nc{\bX}{{\bf X}}
\nc{\bXYZ}{{\bf XYZ}}
\nc{\bY}{{\bf Y}}
\nc{\bF}{{\bf F}}
\nc{\bS}{{\bf S}}
\nc{\dV}{{\delta V}}
\nc{\dE}{{\delta E}}
\nc{\TT}{{\Theta}}
\nc{\dPsi}{{\delta\Psi}}
\nc{\order}{{\cal O}}
\nc{\Rout}{R_{\rm out}}
\nc{\eplus}{e_+}
\nc{\eminus}{e_-}
\nc{\epm}{e_\pm}
\nc{\eps}{\varepsilon}
\nc{\vnabla}{{\vec\nabla}}
\nc{\G}{\Gamma}
\nc{\w}{\omega}
\nc{\mh}{h}
\nc{\mg}{g}
\nc{\vphi}{\varphi}
\nc{\tlambda}{\tilde\lambda}
\nc{\be}{\begin{equation}}
\nc{\ee}{\end{equation}}
\nc{\ba}{\begin{eqnarray}}
\nc{\ea}{\end{eqnarray}}
\nc{\g}{\gamma}
\nc{\ol}{\overline}
\newtheorem{obs}{Observation}
\newtheorem{theo}{Theorem}[section]
\newtheorem{prop}{Proposition}[section]
\newtheorem{lem}{Lemma}[section]
\newtheorem{rmk}{Remark}[section]
\nc{\pT}{\partial_T}
\nc{\pz}{\partial_z}
\nc{\pt}{\partial_t}
\nc{\la}{\langle}
\nc{\ra}{\rangle}
\nc{\infint}{\int_{-\infty}^{\infty}}
\nc{\halfwidth}{6.5cm}
\nc{\figwidth}{10cm}
\nc{\nlayers}{L} \nc{\nsectors}{M} 
\nc{\indicator}{\mathbf{1}} 
\nc{\Rhole}{R_{\rm hole}} 
\nc{\Rring}{R_{\rm ring}} 
\nc{\neff}{n_{\rm eff}} 
\nc{\Frem}{F_{\rm rem}} 
\nc{\Real}{\mathbb R} 
\nc{\Z}{\mathbb Z} 
\nc{\DD}{\Delta}
\nc{\cD}{\mathcal D} 
\nc{\lnorm}{\left\|} 
\nc{\rnorm}{\right\|} 
\nc{\rnormp}{\right\|_{\ell^{p,\eps}}} 
\nc{\rar}{\rightarrow}
\date{\today}
\begin{document}

\title[Quasi-linear dynamics in NLS]{Quasi-linear dynamics in nonlinear Schr\" odinger equation with periodic
boundary conditions}

\author{M. Burak Erdo\smash{\u{g}}an and Vadim Zharnitsky}
\thanks{The authors were partially supported by NSF grants DMS-0505216 (V.~Z.) and DMS-0600101 (B.~E.)}
\address{Department of Mathematics \\
University of Illinois \\
Urbana, IL 61801, U.S.A.} \email{berdogan@math.uiuc.edu \\ vz@math.uiuc.edu}

\maketitle

\begin{abstract}
It is shown that a large subset of initial data with finite energy 
($L^2$ norm) evolves nearly linearly in nonlinear Schr\" odinger equation 
with periodic boundary conditions. These new solutions are not perturbations of the 
known ones such as solitons, semiclassical or weakly linear solutions.
\end{abstract}

\section{Introduction} 
The nonlinear Schr\" odinger (NLS) equation 
\ba
iq_t+\Delta q+|q|^2 q =0,
\ea
where $q: \Real_t \times M_x \rightarrow {\mathbb C}$,
frequently appears as the leading approximation of  
the envelope dynamics of a quasi-monochromatic plane wave propagating in a weakly nonlinear 
dispersive medium. It arises in a number of physical models  in the description of nonlinear waves 
such as the propagation of  a laser beam in a medium whose index of reflection is sensitive to 
the wave amplitude.  

NLS has been considered on various domains 
such as  $M=\Real^n, {\mathbb T^n}$, with periodic or Dirichlet boundary conditions. One dimensional cubic
NLS is integrable \cite{ZakSha} and the explicit (or approximately explicit) solutions can be obtained 
as solitons, cnoidal waves, and their perturbations. There have been also many interesting results on the long 
time asymptotics of solutions of integrable NLS in the limit of small dispersion, see {\em e.g.} 
the recent monograph \cite{KamMcL}, \cite{DeiZho,TovVen,BioKod} and references therein. 

Recent results in optical communication literature (see, {\em e.g.} \cite{BerDav,Ess, MamMam}, and the 
appendix) 
suggest that for some initial data (highly localized pulses) the evolution is nearly linear.
 Based on these studies, 
we introduce a large class of solutions, which we call quasi-linear,  for one dimensional cubic NLS with periodic 
boundary conditions. These solutions can be characterized by the magnitude of Fourier coefficients 
of the initial data. We prove that these solutions evolve nearly 
linearly using  a normal form reduction and  
 estimates on Fourier sums. Although we do not explicitly use integrability, we do rely on  the integrability of 
the quartic normal form which is partially responsible 
for quasi-linear behavior. Therefore, similar results can be obtained for some 
nonlinear PDEs, such as $iq_t+ q_{xxxx}+|q|^2 q =0$, for which there are no integrability results.
We do not study long time asymptotics but rather the finite time dynamics in the limit of 
spectral broadening of initial data. This broadening forces $\|q(x,0) \|_{H^s}$ to grow to infinity, making the 
analysis rather nontrivial even for the finite time interval.  
While, we consider the focusing case, our result holds for defocusing case as well. 
The reader will be able to see that our proof can be immediately adapted for the defocusing case, since  
nowhere our arguments rely on the nonlinearity sign.  
 
In many engineering and physics applications, nonlinearity is unavoidable while modeling and optimizing 
a linear behavior is much easier than a nonlinear one.
Therefore, it is an important question whether a nonlinear system can be made to behave linearly. 
In applied mathematics and physics literature, such a behavior has been observed in  
{\em e.g.} \cite{Abl, Ess, GLC, Mik, ManZak}. We believe that our result gives a systematic 
way to analyze this behavior in nonlinear systems when the energy is distributed over many Fourier 
harmonics.

\section{Main Results}
We consider the nonlinear Schr\"odinger equation with periodic 
boundary conditions,
\[
iq_t + q_{xx}+ 2|q|^2 q = 0, 
\]
with initial data in $q(0) \in L^2(-\pi,\pi)$.
In \cite{Bou1}, Bourgain proved the $L^2$ global well-posedness of this equation.
The numerical simulations of quasi-linear regime for light wave communication systems  
suggest that the following statement should hold (see, {\em e.g.}, \cite{Ess, Mik}) 
\begin{obs}
Assume that initial data is a localized Gaussian
\[
q(x,0) = \frac{1}{\sqrt{\eps}} \,\, e^{-\frac{x^2} {\eps^2}}h(x),
\]
 where $h(x)$ is a smooth cutoff near $x=\pm \pi/2$. Then the initial data evolves quasi-linearly,
\ba\label{eq:obser}
\|q(x,t)-e^{it(\Delta +4 P)}q(x,0)\|_2 \rightarrow 0,
\ea 
as $\eps \rightarrow 0$ and for $t \leq T$, where $T$ is a fixed positive number, 
and $P=\|q(\cdot,0)\|_2^2/2\pi $.
\end{obs}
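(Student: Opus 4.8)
The plan is to pass to Fourier coefficients in an interaction representation, strip off the exactly solvable part of the dynamics (the resonant ``quartic normal form''), and exploit that the assumed spectral broadening forces every \emph{individual} Fourier coefficient of the data to be $O(\sqrt\eps)$ while the $L^2$ norm stays $O(1)$.

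Writing $q(x,t)=\sum_n a_n(t)e^{inx}$ and setting $a_n=e^{i(4P-n^2)t}c_n$, a direct computation with the Gaussian — the cutoff $h$ changing $\widehat q(n)$ only by errors that are super-exponentially small in $\eps$ — gives $\sup_n|\widehat q(n)|\les\sqrt\eps$ and $\sum_n|\widehat q(n)|^2=\|q(0)\|_2^2/2\pi\to(2\sqrt{2\pi})^{-1}$, and by Plancherel the asserted convergence is equivalent to $\|c(t)-c(0)\|_{\ell^2}\to0$ uniformly on $[0,T]$. In these variables $c_n$ solves $i\dot c_n=2|c_n|^2c_n-2\sum^*e^{-i\Omega s}c_{n_1}\bar c_{n_2}c_{n_3}$, the sum being over $n_1-n_2+n_3=n$ with $\Omega:=n_1^2-n_2^2+n_3^2-n^2\ne0$; the mean-field term $-4Pc_n$ has been absorbed into the gauge $e^{4iPt}$, which together with $e^{it\Delta}$ is exactly $e^{it(\Delta+4P)}$. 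The point of the interaction picture is the identity $|\Omega|=2\,|n_1-n|\,|n_3-n|$, so that $\{\Omega=0\}=\{n_1=n\}\cup\{n_3=n\}$: only the two trivial resonances survive, producing the shift $4P$ and the diagonal term $2|c_n|^2c_n$, which is precisely why the leading behaviour is the solvable one.

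Next I would run one normal-form step: integrating the non-resonant sum in $s$ by parts turns it into cubic boundary terms $\sum^*\Omega^{-1}e^{-i\Omega s}c_{n_1}\bar c_{n_2}c_{n_3}\big|_0^t$ plus a time integral in which $\partial_s(c_{n_1}\bar c_{n_2}c_{n_3})$ is replaced, through the equation, by a quintic expression carrying the same weight $\Omega^{-1}$ and an extra interior phase. Thus $c_n(t)-c_n(0)$ is, up to constants, the sum of (i) $\int_0^t|c_n|^2c_n\,ds$, (ii) two cubic boundary terms weighted by $\Omega^{-1}$, and (iii) a time integral of quintic terms with this weight. Terms (i) and (ii) are the elementary ``Fourier-sum'' estimates: with $u=|c|$ and $w_j=|j|^{-1}\indicator_{j\ne0}$ one has $|(\mathrm{ii})_n|\les\|c\|_{\ell^\infty}\,(w*u)(n)^2$, and since $w\in\ell^{4/3}$, Young's inequality gives $\big\|(w*u)^2\big\|_{\ell^2}=\|w*u\|_{\ell^4}^2\les\|u\|_{\ell^2}^2=O(1)$, so that $\|(\mathrm{ii})\|_{\ell^2}\les\sqrt\eps$; likewise $\|(\mathrm{i})\|_{\ell^2}\les T\|c\|_{\ell^\infty}^2\|c\|_{\ell^2}=O(\eps)$.

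The hard part lies in two places. First, each of (i)--(iii) needs the pointwise-in-frequency bound $\sup_n|\widehat{q(t)}(n)|\les\sqrt\eps$ for all $t\le T$; this is not controlled by $L^2$, and since $\|q(t)\|_{H^s}\to\infty$ one cannot argue perturbatively in any $H^s$, so this smallness must be \emph{propagated} by a continuity/bootstrap argument. It closes precisely because the estimates above show the nonlinear correction to every Fourier coefficient is only $O(\eps)$, hence cannot destroy the $O(\sqrt\eps)$ bound over a fixed time; equivalently one works in Bourgain's $X^{0,b}$ space enlarged by the component $\sup_n\|\langle\tau+n^2\rangle^{b}\widehat u(n,\tau)\|_{L^2_\tau}$, which for the free evolution is $\les\sqrt\eps$. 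Second, and most delicate, is the quintic remainder (iii): it cannot be closed using $\ell^\infty$ and $\ell^2$ alone, since its interior factor $\widehat{|q|^2q}(n_1)$ is only $O(1)$ and $\|q(t)\|_{L^3}$ need not stay bounded. I would control it either by a further normal-form step — the secondary resonances $\{\Omega+\Omega_1=0\}$ being harmless thanks to the integrability of the quartic normal form (the resonant interactions see only the $|a_n|^2$) — or by Bourgain's periodic Strichartz estimates to handle that interior factor in a time-averaged sense, namely $\int_0^T\||q(s)|^2q(s)\|_{L^2}^2\,ds\les\|q\|_{X^{\eta,b}([0,T])}^6\les\eps^{-6\eta}$ for every $\eta>0$ (the $\eps^{-\eta}$ coming from iterating $H^\eta$-local theory over the $O(1)$ unit-length intervals on which Bourgain's $L^2$ well-posedness applies), which, combined with the $\sqrt\eps$ gained by pulling $\ell^\infty$ off one external leg, makes (iii) of size $O(\eps^{1/2-3\eta})\to0$. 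Throughout, all these manipulations are first carried out for frequency-truncated smooth data, where every series converges absolutely, and then passed to the genuine $L^2$ solution furnished by Bourgain's global well-posedness, using that all estimates are uniform in the truncation and that NLS depends continuously on $L^2$ data.
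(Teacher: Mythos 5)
Your reduction to Fourier variables, the gauge removing $4P$, the resonance identity $|\Omega|=2|n_1-n|\,|n_3-n|$, and the Young-inequality estimate of the cubic boundary terms are all sound, and up to that point you are running a one-step, Duhamel-side version of the paper's normal form. The genuine gap is the step you yourself flag as the hard part: propagating $\sup_n|\widehat{q(t)}(n)|\les\sqrt\eps$ on $[0,T]$. Your stated reason --- that ``the nonlinear correction to every Fourier coefficient is only $O(\eps)$'' --- is not what your own estimates give: the boundary term (ii) evaluated at time $t$ is of size $\|c(t)\|_{\ell^\infty}\|c(t)\|_{\ell^2}^2$, i.e.\ exactly the order $\sqrt\eps$ you are trying to propagate, with a constant proportional to $P$ (a fixed number, not small), and your Strichartz bound makes the quintic remainder (iii) of size $\eps^{1/2-3\eta}$, which is \emph{larger} than $\sqrt\eps$ as $\eps\to0$. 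Hence the continuity/bootstrap argument at level $A\sqrt\eps$ does not close from $\ell^\infty$ and $\ell^2$ information alone, and the same objection applies to closing a fixed point in $X^{0,b}$ enlarged by the $\ell^\infty_n$ component. What is missing is the further property of the Gaussian data, $\|\widehat{q(0)}\|_{\ell^1}\les\eps^{-1/2}$ (equivalently the whole scale $\|\widehat{q(0)}\|_{\ell^p}\les\eps^{\frac12-\frac1p}$), together with a mechanism to propagate it: with it, convolutions such as $\sum_{m\neq n}|c_m|/|n-m|$ are pointwise $O(\eps^{1/2-})$ rather than merely $O(1)$, so the cubic correction to each coefficient is $O(\eps^{3/2-})$, strictly smaller than the propagated bound, and the bootstrap closes for arbitrary $P$. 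This is precisely the role of the classes $B^\delta_{\eps,C}$ and of Propositions 2.1 and 2.2 in the paper, where the normal form is implemented as two canonical changes of variables whose deviation from the identity and whose error term are $O(\eps^{\frac32-\frac1p-})$ in every $\ell^{p,\delta}$, $1\le p\le\infty$.

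A second, smaller gap: your treatment of the quintic remainder by ``a further normal-form step, the secondary resonances being harmless thanks to integrability'' is an assertion, not an argument. The resonant six-linear terms $\{H_4^{\rm nr},F_1\}^{\rm r}$ neither vanish nor reduce to functions of the actions; the paper has to estimate them by an arithmetic counting argument (on the resonance surface $l(m)=q(m)=0$ two of the summation indices are determined, up to finitely many choices, by the remaining ones --- Lemma 3.2 and Subsection 4.3), and has to introduce the second generator $F_2$ with its own flow estimates. Your alternative Strichartz route does yield a remainder tending to zero in $L^2$ (modulo the $\eps^{-\eta}$ bound on $\|q(t)\|_{H^\eta}$ over $[0,T]$, which you assert rather than prove), but, as explained above, it is too lossy to feed back into the $\ell^\infty$ bootstrap, which is where the proof actually has to close.
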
 

We will prove \eqref{eq:obser} for a large class of initial data (including the ones above) characterized by the magnitude of Fourier coefficients.
We will use Fourier transform in the form
\begin{align*}
q(x,t)&= \sum_{n \in \Z} u(n,t) e^{i n x}\\
u(m,t)&= \frac{1}{2\pi} \int_{-\pi}^{\pi} q(x,t) e^{-i m x} dx,
\end{align*}
so that the NLS equation takes the form
\ba
i \frac{d u(m)}{d t} -m^2 u(m) + 2 \sum_{m_1+m_2-m_3=m} u(m_1) u(m_2)\bar u(m_3)=0.
\ea

Our main result is the following theorem. 
\begin{theo}\label{thm:main}
 Let $P>0$ and $C>0$ be fixed. Assume that the Fourier sequence of the initial data 
$u(n,0)=\widehat{q(\cdot,0)}(n)$ satisfies
$$ \|u(\cdot,0)\|_{\ell^{\infty}}\leq C \eps^{\frac12},\,\,\,\,\,\,\,\,\,\|u(\cdot,0)\|_{\ell^{1}}\leq C \eps^{-\frac12},$$
 for sufficiently small $\eps\in (0,1)$. Then, for each $t>0$, 
\begin{equation}\label{eq:mainclaim}
\|q(\cdot,t)-e^{it(\Delta +4 P)}q(\cdot,0)\|_{L^2} \les \la t \ra \,\eps^{1-},
\end{equation}
where $P=\|q(\cdot,0)\|_2^2/2\pi$, $\la t \ra=\sqrt{1+t^2}$ and the implicit constant depends only on $C$.

\end{theo}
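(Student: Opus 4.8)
The plan is to work entirely on the Fourier side. Passing to the interaction representation $v(m,t)=e^{itm^2}u(m,t)$ and then to $w(m,t)=e^{4iPt}v(m,t)$, one has $w(m,0)=u(m,0)$ and $|w(m,t)|=|u(m,t)|$, and by Plancherel the left side of \eqref{eq:mainclaim} equals $\sqrt{2\pi}\,\|w(\cdot,t)-w(\cdot,0)\|_{\ell^2}$; also $P=\sum_m|u(m,0)|^2=\sum_m|w(m,t)|^2$ is conserved. Using the lattice identity $m^2-m_1^2-m_2^2+m_3^2=2(m-m_1)(m-m_2)$ on $m_1+m_2-m_3=m$, the resonant set $\{m_1=m\}\cup\{m_2=m\}$ produces $-4Pw(m)+2|w(m)|^2w(m)$ (the last term correcting the overcount at $m_1=m_2=m$), whose linear part is exactly what the factor $e^{4iPt}$ was chosen to remove; hence
\[
i\,\partial_t w(m)=2|w(m)|^2w(m)-2\sum_{\substack{m_1+m_2-m_3=m\\ (m-m_1)(m-m_2)\neq0}}e^{2it(m-m_1)(m-m_2)}\,w(m_1)w(m_2)\overline{w(m_3)}.
\]
The diagonal term is harmless: its $\ell^2$ norm is $\les\|w(t)\|_{\ell^\infty}^2\|w(t)\|_{\ell^2}\les\eps$, so its Duhamel contribution to $w(t)-w(0)$ is $\les\langle t\rangle\eps$.

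For the oscillatory sum I would carry out one step of a Poincar\'e--Dulac normal form directly inside the Duhamel integral: since the phase $\phi:=(m-m_1)(m-m_2)$ is a nonzero integer on the summation set, write $e^{2is\phi}=\tfrac{1}{2i\phi}\partial_s e^{2is\phi}$ and integrate by parts in $s$. This yields (i) boundary terms at $s=0$ and $s=t$ of the schematic form $\Theta[a](m):=\sum_{m_1+m_2-m_3=m}\frac{a(m_1)\,a(m_2)\,a(m_3)}{|m-m_1|\,|m-m_2|}$ with $a=|u(0,\cdot)|$, resp. $a=|w(t,\cdot)|$, and (ii) an interior integral in which $\partial_s$ falls on one of the $w(m_j)$; substituting the equation for $\partial_s w(m_j)$ converts this into a quintic expression carrying the extra weight $1/\phi$ and an inner oscillatory factor.

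The crux is the Fourier-sum estimate $\|\Theta[a]\|_{\ell^2_m}\les\eps^{1-}$, valid for any $a\ge0$ with $\|a\|_{\ell^\infty}\les\eps^{1/2}$ and $\|a\|_{\ell^1}\les\eps^{-1/2}$ (hence $\|a\|_{\ell^2}\les1$). Reparametrising by $p=m-m_1$, $q=m-m_2$ gives $\Theta[a](m)=\sum_{p,q\neq0}\frac{a(m-p)\,a(m-q)\,a(m-p-q)}{|p|\,|q|}$. I would bound the $q$-sum uniformly: by $2\alpha\beta\le\alpha^2+\beta^2$ and shifting, $\sum_{q\neq0}\frac{a(m-q)a(m-p-q)}{|q|}\les\sum_{q\neq0}\frac{a(m-q)^2}{|q|}$, and splitting $|q|\le R$ against $|q|>R$ with $R\sim\eps^{-1}$ bounds this by $\|a\|_{\ell^\infty}^2\log(1/\eps)+\|a\|_{\ell^2}^2\eps\les\eps\log(1/\eps)$; then $\|\Theta[a]\|_{\ell^2_m}\les\eps\log(1/\eps)\,\big\|\sum_{p\neq0}\tfrac{a(m-p)}{|p|}\big\|_{\ell^2_m}$, and a dyadic decomposition of the kernel $|p|^{-1}\mathbf 1_{p\neq0}$ — estimating each block by Young's inequality, using $\|a\|_{\ell^1}$ on the wide blocks and $\|a\|_{\ell^2}$ on the narrow ones — bounds the last factor by $\les\log(1/\eps)$. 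Thus $\|\Theta[a]\|_{\ell^2_m}\les\eps\log^2(1/\eps)\les\eps^{1-}$. The analogous but easier analysis bounds the quintic interior integrand by $\les\eps^{2-}$ (it has two more $\ell^\infty$ factors, and the extra weight $1/\phi$ and a possible second integration by parts only cost logarithms; the resonant part of the inner $\partial_s w(m_j)$ is again diagonal and $\les\eps$), so its time integral is $\les\langle t\rangle\eps^{2-}$.

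Finally, the hypotheses give the needed bounds on $w(\cdot,0)=u(\cdot,0)$, and $\|w(t)\|_{\ell^2}^2=P$; a short bootstrap then propagates $\|w(t)\|_{\ell^\infty}\les\eps^{1/2}$ and $\|w(t)\|_{\ell^1}\les\eps^{-1/2}$ on an interval $|t|\les\eps^{-\kappa}$ for some $\kappa>0$ (the nonlinear corrections being of lower order in $\eps$), which for each fixed $t$ holds once $\eps$ is small. Summing the three contributions gives $\|w(t)-w(0)\|_{\ell^2}\les\langle t\rangle\eps+\eps^{1-}+\langle t\rangle\eps^{2-}\les\langle t\rangle\eps^{1-}$, with constant depending only on $C$. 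The main obstacle is the Fourier-sum lemma and its quintic analogue: extracting the full power $\eps^{1-}$ rather than the $\eps^{1/2-}$ that a crudely iterated estimate would produce forces one to use the $\ell^\infty$-smallness against every logarithmically divergent frequency sum, and one must ensure that evaluating the boundary term at time $t$ only costs quantities — the $\ell^\infty$, $\ell^1$ and $\ell^2$ norms of $w$ — that the flow actually propagates.
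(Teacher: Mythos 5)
Your reduction to the Fourier side, the removal of the resonant set $\{m_1=m\}\cup\{m_2=m\}$ producing the $4P$ shift and the diagonal cubic term, and the estimate of the boundary (cubic) term $\Theta[a]$ are all sound, and the bound $\|\Theta[a]\|_{\ell^2}\les \eps\log^2(1/\eps)$ is essentially the paper's Proposition~\ref{prop:nformest} (near-identity of the first normal form, proved there via convolution with $\rho(n)=1/|n|$ and Young's inequality). The genuine gap is your treatment of the interior quintic term. That term is exactly the derivative of $\{H_4^{\rm nr},F_1\}$ in the paper's language, and it is \emph{not} $O(\eps^{2-})$ in $\ell^2$; a direct estimate gives only $O(1)$ (up to logarithms). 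Indeed, writing it as $\sum_{A,B\neq 0}\frac{1}{|A||B|}\,c(m-A)\,w(m-B)\,\bar w(m-A-B)$ with $c=|w\ast w\ast \bar w(-\cdot)|$, one has $\|c\|_{\ell^\infty}\les\eps^{-1/2}$, $\|c\|_{\ell^1}\les\eps^{-3/2}$, and the two denominators can only tame two of the four free summations, so the best unconditional bound is of order $\|w\|_{\ell^\infty}^2\|w\|_{\ell^1}^2\|w\|_{\ell^2}\sim 1$. This is precisely the obstruction the paper flags in footnote~\ref{fn:gf1h} (the case $a=1$, $b=0$ of Subsection~\ref{sec:gf2h4} gives $\eps^{\frac12-\frac1p-}$, i.e.\ $\eps^{0-}$ at $p=2$), and it is why a second normal form transformation $F_2$ is introduced.

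Your parenthetical fix — ``a possible second integration by parts'' — is the right idea (it is the Duhamel-level analogue of $F_2$), but your claim that the leftover resonant part is ``again diagonal and $\les\eps$'' is false. After substituting the equation for $\partial_s w(m_j)$, the combined phase is $\phi\pm\phi'$ with $\phi=(m-m_1)(m-m_2)$, $\phi'=(m_1-n_1)(m_1-n_2)$, and the resonant set $\{\phi\pm\phi'=0,\ \phi,\phi'\neq 0\}$ (the paper's $l(m)=0$, $q(m)=0$ for six indices) is large and non-diagonal. Handling it requires the counting argument of the subsection on $\{H_4^{\rm nr},F_1\}^{\rm r}$: on the resonance surface two of the six indices are determined, up to finitely many choices, by the remaining ones, which is what allows two factors to be taken in $\ell^\infty$ and yields $\eps^{1-}$ (not $\eps^{2-}$) for this contribution. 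Moreover, the second integration by parts creates new boundary terms with the double kernel $\frac{1}{|\phi|\,|\phi+\phi'|}$ and septic interior terms, whose estimates (the paper's Lemma~\ref{lem:F2}, Lemma~\ref{lem:N}, and the flow-derivative bounds \eqref{eq:deriv}--\eqref{eq:deriv1}) are the bulk of the work and cannot be dismissed as ``only logarithms''. Finally, note that your bootstrap of $\|w(t)\|_{\ell^1}\les\eps^{-1/2}$ and $\|w(t)\|_{\ell^\infty}\les\eps^{1/2}$ itself requires $\ell^1$ and $\ell^\infty$ versions of all of these estimates, so the missing quintic analysis cannot be bypassed there either.
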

\begin{rmk}
The initial data in the observation above satisfies the hypothesis of the theorem. In fact, if $f$ is an $H^s$ function for some $s>1$   with 
compact support on $(-\pi,\pi)$, then 
\[
f_\eps(x)= \frac{1}{\sqrt{\eps}}f(x/\eps)
\]
satisfies the hypothesis of the theorem.
\end{rmk}

By continuous dependence on initial data in $L^2$, it suffices to prove \eqref{eq:mainclaim} for any $\delta>0$ and for any initial data in the following subset of $L^2$: 
\[
B^{\delta}_{\eps, C}=\{f \in L^2: \|\hat f\|_{\ell^{p,\delta}}:=\left[\sum_{n=-\infty}^\infty |\hat f(n)|^p e^{\delta
|n|p}\right]^{1/p}\leq C\eps^{\frac12-\frac1p},  p\in [1, \infty]\}.
\]
Since $B^{\delta}_{\eps, C}\subset H^1$, 
we can introduce the Hamiltonian 
\cite{Kuk} 
\[
H(u)= i\sum_{n} n^2 |u(n)|^2 - i\sum_{l(n)=0}
u(n_1)u(n_2)\bar u(n_3)\bar u(n_4),
\]
with conjugated variables $\{u(n), \bar u(n)\}_{n \in \Z}$,
where $l(n)=n_1+n_2-n_3-n_4$. The Hamiltonian flow is then
given by
\[
\dot u(n) = \frac{\partial H}{\partial \bar u(n)}\cdot
\]
 
Theorem~\ref{thm:main} follows from the following by continuous dependence on initial data in $L^2$.
\begin{theo}\label{thm:main1} Let $P>0$ and  $C>0$  be fixed.
 Assume that $\|q(0)\|_2^2=2 \pi P$, and $q(\cdot,0) \in B^\delta_{\eps,C}$  for some $\delta>0$, and for sufficiently small $\eps\in(0,1)$.
Then, for each $t>0$,
\be 
\|q(\cdot,t)-e^{it(\Delta + 4P)}q(\cdot,0)\|_2 \les \la t\ra \, \eps^{1-}, \label{eq:near_lin}
\ee 
 where the implicit constant depends only on $C$.
\end{theo}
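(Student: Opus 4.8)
The plan is to prove \eqref{eq:near_lin} via a two-step normal form reduction of the Hamiltonian $H$, exploiting the fact that the resonant part of the quartic interaction is, up to the gauge term $4P$, completely integrable. Writing the equation in interaction representation $v(n,t)=e^{in^2 t}u(n,t)$, the nonlinear term becomes
\[
i\dot v(n)=-2\sum_{m_1+m_2-m_3=n} e^{i\Omega t}\,v(m_1)v(m_2)\bar v(m_3),\qquad \Omega=m_1^2+m_2^2-m_3^2-n^2.
\]
First I would split the frequency sum into the resonant set $R_n=\{\Omega=0\}$ and the non-resonant set. On $R_n$ a standard computation (the one behind the fact that the quartic normal form of 1D cubic NLS is integrable) shows that the contribution collapses: $m_1^2+m_2^2=m_3^2+n^2$ together with $m_1+m_2=m_3+n$ forces $\{m_1,m_2\}=\{m_3,n\}$, so the resonant part of the vector field equals $2(2\|v\|_{\ell^2}^2-|v(n)|^2)v(n)=(4P-2|v(n)|^2/\text{vol})\cdot v(n)$ — precisely the $e^{4iPt}$ gauge factor plus a harmless diagonal cubic term in a single mode that can be absorbed or shown to be lower order in $\eps$ after summing. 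The non-resonant part, where $|\Omega|\ge 1$, is then removed by a near-identity transformation $v=w+\eps$-small correction: integrating by parts in time (Poincaré–Dulac / Shatah-type normal form), each non-resonant monomial $e^{i\Omega t}/(i\Omega)$ picks up a $\la t\ra$ from the boundary terms and produces a quintic remainder.

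The heart of the estimate is then bounding the quintic remainder and the correction terms in $\ell^2$, and this is where the hypothesis $q(0)\in B^\delta_{\eps,C}$ enters. Because of the weight $e^{\delta|n|}$, the solution (and its interaction-picture avatars) remains, on any fixed time interval $[0,T]$, in an analytic class: I would first establish a short-time a priori bound $\|v(\cdot,t)\|_{\ell^{1,\delta'}}\les \eps^{-1/2}$ and $\|v(\cdot,t)\|_{\ell^{\infty,\delta'}}\les \eps^{1/2}$ for a slightly smaller weight $\delta'<\delta$ and all $t\le T$, via a Gronwall/bootstrap argument using that the cubic nonlinearity is bounded on the weighted $\ell^1$ algebra (the exponential weight is subconvolutive: $e^{\delta'|m_1+m_2-m_3|}\le e^{\delta'|m_1|}e^{\delta'|m_2|}e^{\delta'|m_3|}$ so $\ell^{1,\delta'}$ is a Banach algebra under the trilinear product). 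The non-resonant denominators $1/|\Omega|$ are handled by the elementary divisor estimate: for fixed $n$ and fixed $s=m_1+m_2-m_3-n$... rather, one sums over the non-resonant lattice and uses $\sum_{\Omega\ne 0} 1/|\Omega|^2$-type bounds against the $\ell^1$ norms; the $\eps^{1-}$ (i.e. $\eps^{1-\theta}$ for every $\theta>0$) loss comes from trading a logarithmic divergence in these Fourier sums — roughly $\sum_{|n|\le N}1/|n| \sim \log N$ with $N\sim\eps^{-1}$ being the effective spectral width — against a small power of $\eps$. Each normal-form transformation maps a cubic quantity of size $\eps^{1/2}\cdot\eps^{-1/2}\cdot\eps^{-1/2}=\eps^{-1/2}$ worth of modes into something gaining one power of $\eps^{1-}$ relative to the linear term, so after the two reductions the genuinely nonlinear remainder in $\dot w$ is $O(\eps^{1-})$ in $\ell^2$, and integrating over $[0,t]$ gives the $\la t\ra\,\eps^{1-}$ bound. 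The near-identity transformations themselves must also be shown to move $q$ by only $O(\eps^{1-})$ in $L^2$, which follows from the same weighted estimates since they are $\eps^{1-}$-small corrections to the identity.

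I expect the main obstacle to be the Fourier-sum estimates controlling the non-resonant contribution: one must bound multilinear expressions of the form $\sum e^{i\Omega t}\,\Omega^{-1} v(m_1)v(m_2)\bar v(m_3)\cdots$ in $\ell^2_n$, and naively $\Omega^{-1}$ is not summable. The resolution is to keep one factor of $v$ in $\ell^\infty$ (size $\eps^{1/2}$), the rest in $\ell^1$, and exploit that for each fixed value of the output frequency $n$ and of the "extra" summation variable the set where $|\Omega|\le A$ has controlled cardinality — a divisor/lattice-point count — yielding $\sum 1/|\Omega|\les \log(1/\eps)\les\eps^{-\theta}$ for any $\theta>0$. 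Getting this clean, with the analytic weight absorbing the exponential number of lattice points and leaving only a logarithm, is the technically delicate step; the algebra of the normal form and the integrability of the resonant part are comparatively routine. A secondary subtlety is that the a priori analytic bound degrades the weight $\delta\to\delta'\to\delta''$ at each stage, so one must budget the weight loss carefully, but since only finitely many (two) reductions are needed and $T$ is fixed, a fixed fraction of $\delta$ suffices throughout.
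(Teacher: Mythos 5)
Your overall strategy is the same as the paper's in outline (remove the non-resonant quartic interaction by a near-identity canonical change of variables, keep the resonant part which collapses to the $4P$ phase rotation plus a harmless diagonal cubic, and do the bookkeeping in weighted norms with $\|v\|_{\ell^{\infty,\delta}}\sim\eps^{1/2}$, $\|v\|_{\ell^{1,\delta}}\sim\eps^{-1/2}$), but two of your steps do not hold up as stated. First, the a priori bound you propose to establish \emph{before} the normal form --- $\|v(t)\|_{\ell^{1,\delta'}}\lesssim\eps^{-1/2}$ and $\|v(t)\|_{\ell^{\infty,\delta'}}\lesssim\eps^{1/2}$ for all $t\le T$ by ``Gronwall using the weighted $\ell^1$ algebra'' --- is not obtainable that way: the trilinear algebra estimate only gives $\frac{d}{dt}\|v\|_{\ell^{1,\delta'}}\lesssim\|v\|_{\ell^{1,\delta'}}^{3}\sim\eps^{-3/2}$, so the bound survives only for times of order $\eps$, not a fixed $T$, and shrinking the weight $\delta\to\delta'$ in Cauchy--Kovalevskaya fashion does not repair this. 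Persistence of the $B^\delta_{\eps,C}$ bounds on a fixed time interval is essentially of the same strength as the conclusion; in the paper it is \emph{not} a separate preliminary step but comes out of the normal form itself (the transformation is near-identity in all $\ell^{p,\delta}$ norms and the transformed flow differs from the linear one, which preserves these norms, by $O(t\,\eps^{3/2-1/p-})$), closed by a continuity argument.

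Second, and more seriously, your second reduction --- which you announce but never specify --- is exactly where the paper's real work lies. After the first transformation the Hamiltonian contains the sextic term $\tfrac12\{H_4^{\rm nr},F_1\}$, whose vector field by your own bookkeeping is of size $\eps^{\frac12-\frac1p-}$ in $\ell^p$, i.e.\ order one in $\ell^2$; it is not a perturbative ``quintic remainder.'' One must (i) construct a second generator $F_2$ killing only its non-resonant part, where the small divisor is now $q(m)=m_1^2+m_2^2+m_3^2-m_4^2-m_5^2-m_6^2$, which no longer factors as in the quartic case; its summability is obtained by eliminating one index through $l(m)=0$ and checking that $q$ becomes linear or quadratic in a single remaining variable with nonvanishing leading coefficient, which relies on the restrictions $m_4,m_5\neq m_1$, $m_2,m_3\neq m_6$ inherited from the specific structure of $\{H_4^{\rm nr},F_1\}$; and (ii) estimate directly the resonant part $\{H_4^{\rm nr},F_1\}^{\rm r}$, which no normal form can remove, by using the two relations $l(m)=0$ and $q(m)=0$ simultaneously to determine two of the six summation indices. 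Your generic ``divisor/lattice-point count giving a logarithm'' gestures at, but does not substitute for, either point. Finally, the error also contains higher commutators and the Taylor-remainder integrals composed with the flows $X_{F_j}^s$, whose control requires bounds on the derivatives $D_nX_{F_j}^s$ (and second and third derivatives) with respect to the initial data; these are absent from your plan. As written, the proposal carries out the routine half of the argument and asserts the difficult half.
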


The proof of Theorem~\ref{thm:main1} is based on the normal form transformations, see, e.g., \cite{Kuk}, \cite{KukPos} and \cite{Bou2}. 
In Section~\ref{sec:norform}, we introduce a canonical transformation $u=u(v)$ in the Fourier space which brings the equation into the form\footnote{Similar quasi-linear behavior can be obtained for the nonintegrable NLS   $iq_t+ q_{xxxx}+|q|^2 q =0$ with the leading behavior given by $\dot v(n) = i(n^4 + 4 P) v(n).$},
see \eqref{eq:trans_evol1} and \eqref{eq:remain} below, 
\begin{equation}\label{eq:vevol}
\dot v(n) = i(n^2 + 4 P) v(n)+ E(v)(n).
\end{equation}
We prove that the transformation $u=u(v)$ is near-identical in the following sense.
\begin{prop}\label{prop:nformest}
If $u\in B^\delta_{\eps,C}$ or $v\in B^\delta_{\eps,C}$, then
\[
\|u\|_{\ell^2} = \|v\|_{\ell^2}, \text{ and }\,\,\,\,\,\,\,\, \|u-v\|_{\ell^{p,\delta}} \les  \eps^{\frac{3}{2}-\frac{1}{p}-}
\]
for $1\leq p \leq \infty$, where the implicit constant depends on $C$ and $p$. \\
In particular, if $\eps$ is sufficiently small, then $u\in B^\delta_{\eps,C}$ implies $v\in B^\delta_{\eps,2C}$ and vice versa.
\end{prop}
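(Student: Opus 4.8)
The plan is to read off from Section~\ref{sec:norform} the explicit form of the canonical map $u=u(v)$ and to estimate its nonlinear part by elementary Fourier‑sum inequalities, exploiting that the relevant small divisors are nonzero \emph{integers}, hence $\ge 1$ in absolute value, so that no Diophantine input is needed. Write the transformation as $u=v+\Phi_3(v)+\Phi_{\ge 5}(v)$, where the leading cubic part has the form
\[
\Phi_3(v)(n)=\sum_{\substack{n_1+n_2-n_3=n\\ (n-n_1)(n-n_2)\neq 0}}\frac{c\,v(n_1)v(n_2)\bar v(n_3)}{(n-n_1)(n-n_2)},
\]
(the denominator being, up to a factor $-2$, the non‑resonant phase $n_1^2+n_2^2-n_3^2-n^2$), and $\Phi_{\ge 5}$ collects the higher‑degree contributions of the normal‑form flow. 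Since $n_1+n_2-n_3=n$ forces $|n|\le|n_1|+|n_2|+|n_3|$, the weight $e^{\delta|n|}$ is dominated by $e^{\delta(|n_1|+|n_2|+|n_3|)}$; hence it suffices to bound the corresponding sums with $v$ replaced by $a(k):=e^{\delta|k|}|v(k)|$, for which $\|a\|_{\ell^q}=\|v\|_{\ell^{q,\delta}}\les\eps^{\frac12-\frac1q}$ on $B^\delta_{\eps,C}$, uniformly in $q\in[1,\infty]$ and with constant independent of $\delta$.

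For the cubic piece I would prove the two endpoint estimates and interpolate. At $p=\infty$, bounding the factor with argument $n-n_1-n_2=n_3$ by $\|a\|_{\ell^\infty}\les\eps^{1/2}$ and splitting the remaining double sum,
\[
\|\Phi_3(v)\|_{\ell^{\infty,\delta}}\les \|a\|_{\ell^\infty}\,\Big(\sup_n\sum_{j\neq 0}\frac{a(n-j)}{|j|}\Big)^2\les \eps^{1/2}\cdot\big(\eps^{\frac12-}\big)^2=\eps^{\frac32-},
\]
where each inner sum is $\les_\kappa \|a\|_{\ell^q}\||j|^{-1}\|_{\ell^{q'}}\les_\kappa \eps^{\frac12-\kappa}$ for every $\kappa>0$ (take $q$ large, $q'$ close to $1$, so $\||j|^{-1}\|_{\ell^{q'}(\Z\setminus\{0\})}<\infty$). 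At $p=1$, summing in $n$ first and renaming the indices,
\[
\|\Phi_3(v)\|_{\ell^{1,\delta}}\les \sum_{n_3}a(n_3)\Big(\sum_{n'\neq n_3}\frac{a(n')}{|n'-n_3|}\Big)^2\le \|a\|_{\ell^1}\,\big(\eps^{\frac12-}\big)^2\les \eps^{-\frac12}\cdot\eps^{1-}=\eps^{\frac12-},
\]
and interpolation between $p=1$ and $p=\infty$ gives $\|\Phi_3(v)\|_{\ell^{p,\delta}}\les\eps^{\frac32-\frac1p-}$. The higher‑degree terms $\Phi_{\ge 5}(v)$ are handled by the same scheme — each is a multilinear Fourier sum whose denominators, inherited from $\Phi_3$, are again bounded below — and they are of strictly higher order in $\eps$, so the normal‑form series converges for $\eps$ small and contributes only lower‑order terms; the same bounds, applied to the power‑series inverse of the near‑identity map, give the symmetric statement with the roles of $u$ and $v$ exchanged. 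The identity $\|u\|_{\ell^2}=\|v\|_{\ell^2}$ is immediate since the transformation is the time‑one flow of a Hamiltonian $\chi$ each of whose monomials contains equally many factors $u$ and $\bar u$, so the Poisson bracket $\{\,\|u\|_{\ell^2}^2,\chi\,\}$ vanishes and the mass is conserved along the flow. Finally, as $\tfrac32-\tfrac1p-\kappa>\tfrac12-\tfrac1p$ for $\kappa<1$, the bound $\|u-v\|_{\ell^{p,\delta}}\les\eps^{\frac32-\frac1p-}$ is $\le C\eps^{\frac12-\frac1p}$ for $\eps$ small, which yields the last assertion.

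The main obstacle is not the small divisors (which are harmless here) but the bookkeeping forced by the fact that the data is small only in the ``flat'' norms: $\|v\|_{\ell^\infty}\sim\eps^{1/2}$ and, more generally, $\|v\|_{\ell^{q,\delta}}\sim\eps^{1/2-1/q}$, while $\|v\|_{\ell^{1,\delta}}\sim\eps^{-1/2}\to\infty$. One must arrange, in every Fourier sum, that each factor $v(n_i)$ is paired via H\"older with an $\ell^q$‑norm for $q$ large and each small divisor $|n-n_i|$ with a summable $\ell^{q'}$‑tail, so that every appearance of the bad $\ell^1$‑norm is compensated by a sufficient power of $\eps$. Carrying this balance consistently through the higher‑order terms of the transformation and through its inversion is where the care is needed.
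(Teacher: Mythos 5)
Your cubic estimate and the $\ell^2$-conservation argument are essentially the paper's (weight transfer via $|m_1|+|m_2|+|m_3|\ge|n|$, Young's inequality with $\rho$, phase invariance of the generator), but there is a genuine gap in how you dispose of the higher-degree part of the map. The transformation $u=u(v)$ is the composition of the time-one flows of \emph{two} generators: $F_1$ (quartic) and $F_2$ (sextic, introduced to cancel the nonresonant part of $\{H_4^{\rm nr},F_1\}$). Hence the degree-$5$ contribution to $u-v$ is not a sum whose denominators are ``inherited from $\Phi_3$'': it is $\D_{\bar v}F_2$, whose kernel is $\bigl(q(m)(m_2-m_6)(m_6-m_3)\bigr)^{-1}$ with the six-index resonance function $q(m)=m_1^2+m_2^2+m_3^2-m_4^2-m_5^2-m_6^2$. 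Your scheme for that term --- bound the divisor below by $1$ and close the free sums with the two convolution kernels and H\"older --- loses a full power of $\eps$: in the duality pairing with $\|h\|_{\ell^1}=1$ there are five free sums but only two $\rho$-kernels, so two sums must be closed with $\|v\|_{\ell^1}\sim\eps^{-1/2}$, giving only $O(\eps^{1/2})$ at $p=\infty$ instead of the required $\eps^{3/2-}$. In particular the degree-$\ge 5$ terms are \emph{not} of strictly higher order under your estimates; the quintic piece would dominate the cubic one, and the claimed convergence/smallness of $\Phi_{\ge 5}$ is unsupported.

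What is missing is precisely the content of Lemma~\ref{lem:F2}: a H\"older interpolation between the sum with kernel exponents raised to $1+\eta$ and the unweighted sum, combined with the counting estimate $\sup\||q(m)|^{-1-\eta}\|_{\ell^1_{m_{j_2}}}\les 1$, which is proved by eliminating one index through $l(m)=0$ and observing that $q(m)$ is then affine (or quadratic) in a suitably chosen remaining index with nonzero leading coefficient (this is where the restrictions $m_4,m_5\neq m_1$, $m_2,m_3\neq m_6$ enter). So while the divisors are indeed integers and no Diophantine condition is needed, their \emph{summability in one variable} is an essential input, not a harmless footnote. A secondary, repairable difference: the paper never expands $u(v)$ as a power series; it bounds the flow of each $F_j$ directly on $[0,1]$ (the two-sided statement follows by running the flow backwards), which avoids your unproved convergence and series-inversion claims --- for the $F_1$ part your expansion could be justified since each application of $f_1$ gains $\eps^{1-}$, but for the $F_2$ part you again need the counting lemma before any such gain is available.
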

\noindent
Then, we estimate the error term $E(v)$ as follows 
\begin{prop}\label{prop:error_estimate}
If $v\in B^\delta_{\eps,C}$, then the error term $E(v)$ in the transformed equation \eqref{eq:vevol} satisfies 
\[
\|E(v)\|_{\ell^{p,\delta}} \les \eps^{\frac32-\frac1p-},
\]
for $1\leq p \leq \infty$, where the implicit constant depends on $C$ and $p$.
\end{prop}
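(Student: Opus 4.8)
The plan is to exploit the structure of the error term $E(v)$ produced by the normal form transformation: after eliminating the non-resonant quartic terms from the Hamiltonian via the canonical change of variables $u=u(v)$, the remainder $E(v)$ is a sum of multilinear expressions in $v$ and $\bar v$ of degree at least five, plus the contribution of the resonant quartic terms that survive but turn out (by the integrability of the quartic normal form) to reduce to the gauge term $4Pv(n)$ already split off in \eqref{eq:vevol}. So the first step is to write $E(v)(n)$ explicitly as a finite sum of terms of the form $\sum c_{\vec m}\, v(m_1)\cdots v(m_j)\bar v(m_{j+1})\cdots\bar v(m_{2k+1})$ with $2k+1\geq 5$ and with the constraint $m_1+\cdots-\cdots=n$, where the coefficients $c_{\vec m}$ involve the small divisors $l(n)^{-1}$ coming from the transformation; one must check these divisors are harmless, i.e. bounded below on the non-resonant set, or absorb them using that non-resonance forces $|l(n)|\geq 1$ for integer frequencies.

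The second step is the core estimate: bounding a degree-$(2k+1)$ convolution-type sum in the weighted norm $\ell^{p,\delta}$. The key point is that the exponential weight $e^{\delta|n|}$ is submultiplicative along the frequency constraint $n=m_1+\cdots\pm\cdots$ (up to the obvious inequality $|n|\leq\sum|m_i|$), so $\|v_1\cdots v_{2k+1}\|_{\ell^{p,\delta}}$ (as a convolution) is controlled by a product of weighted norms via Young's inequality: one factor in $\ell^{p,\delta}$ and the remaining $2k$ factors in $\ell^{1,\delta}$. Now invoke Proposition~\ref{prop:nformest} (or directly the hypothesis $v\in B^\delta_{\eps,C}$) to bound $\|v\|_{\ell^{1,\delta}}\les\eps^{-1/2}$ and $\|v\|_{\ell^{p,\delta}}\les\eps^{1/2-1/p}$. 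This yields, for a degree-$(2k+1)$ term,
\[
\|\,\cdot\,\|_{\ell^{p,\delta}} \les \eps^{\frac12-\frac1p}\,\big(\eps^{-\frac12}\big)^{2k} = \eps^{\frac12-\frac1p-k+ } \cdot \eps^{?},
\]
which for $2k+1=5$, i.e. $k=2$, gives $\eps^{\frac12-\frac1p-2}$ — too large by $\eps^{5/2}$ compared to the claimed $\eps^{3/2-1/p-}$. Hence a naive count fails, and the real work is to do better: one must use the $\ell^\infty$ smallness $\|v\|_{\ell^\infty}\les\eps^{1/2}$ to gain back powers of $\eps$. Concretely, for each monomial one puts \emph{two} of the factors in $\ell^{\infty,\delta}$ (gaining $\eps^{1/2}$ each rather than losing $\eps^{1/2}$ via $\ell^{1}$), so the effective gain per pair of factors swapped from $\ell^1$ to $\ell^\infty$ is $\eps$; balancing the $\ell^p$, $\ell^1$, $\ell^\infty$ placements via Hölder/Young on the constrained sum so that the total homogeneity works out to $\eps^{3/2-1/p-}$ for the degree-five term, and noting higher degree terms are even smaller, is the crux.

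The main obstacle is precisely this bookkeeping of frequency sums: after placing two factors in $\ell^\infty$, the remaining convolution still has $2k-1$ factors and one must sum over the constraint $m_1+\cdots=n$ without losing the full $\ell^1$-type bound — this requires, for the two $\ell^\infty$ slots, retaining a residual summation over their frequencies, which costs a divergent count unless one spends a tiny bit of the exponential weight (or a tiny bit of an extra $\ell^{1+}$ integrability, hence the ``$1-$'' and ``$-$'' in the exponents) to make a geometric series converge. This is the standard ``$\eps^{-}$'' loss. I would handle it by choosing, say, $p'$ slightly bigger than $1$ for the two distinguished factors, using $\|v\|_{\ell^{p',\delta}}\les\eps^{1/2-1/p'}$ which is $\eps^{1/2-}$, and then the summation over the constraint converges by Young with the small exponent loss absorbed into $\eps^{-}$. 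Carrying this through for every monomial in $E(v)$ — tracking that the small divisors $l(n)^{-1}$ from Proposition~\ref{prop:nformest} do not spoil the submultiplicativity of the weight — completes the proof; the higher-order ($\geq 7$) terms are strictly better and need only the crude count.
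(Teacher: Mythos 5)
Your structural description of $E(v)$ is off in two ways that matter. First, the resonant quartic part does \emph{not} reduce entirely to the gauge term: only $H_4^{\rm r2}=2i\sum_{m_1,m_2}|v(m_1)|^2|v(m_2)|^2$ produces $4Pv(n)$, while the diagonal piece $H_4^{\rm r1}=-i\sum_m|v(m)|^4$ stays in the remainder, so $E(v)$ contains the cubic term $2i|v(k)|^2v(k)$ (vector-field degree $3$, not $\geq 5$); this term is what saturates the bound $\eps^{3/2-1/p}$ and must be estimated, not absorbed. Second, $E(v)$ is not purely polynomial: the Taylor remainder contributes terms like $\int_0^1(1-s)^2 g_{F_1}^3(H)\circ X_{F_1}^s\,ds$, i.e.\ compositions with the time-$s$ flows, whose estimation requires bounds on the derivatives $D_n[X_{F_j}^s(v)](m)$ in mixed norms ($\ell^\infty_m\ell^1_n$ and $\ell^\infty_n\ell^1_m$, plus higher derivatives). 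Your proposal has no mechanism for these terms.

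More seriously, the quantitative core of your plan fails. You propose to discard the small divisors as ``harmless'' and then recover the loss by placing two factors in $\ell^\infty$; but Young's inequality on a constrained sum allows only one factor outside $\ell^1$ (plus the dual factor), and each additional $\ell^\infty$ placement must be paid for by a summable kernel in the corresponding free variable --- which is exactly what the divisors $(m_1-n)^{-1}(m_2-n)^{-1}$, $q(m)^{-1}(m_2-m_6)^{-1}(m_6-m_3)^{-1}$ and the resonance condition $q(m)=0$ (which determines indices by a counting argument) provide in the paper (Lemmas on convolution with $\rho$, the $(1+\eta)$-H\"older splitting, and the index-counting for $q(m)=0$). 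Even granting your two $\ell^\infty$ placements, a quintic vector-field term would come out as $\eps^{1/2}\cdot\eps^{1/2}\cdot\eps^{1/2-1/p}\cdot\eps^{-1/2}\cdot\eps^{-1/2}=\eps^{1/2-1/p}$, a full power of $\eps$ short of the claimed $\eps^{3/2-1/p-}$; the paper reaches the correct power only by extracting four (nearly) $\ell^\infty$ factors out of six via the divisors and the resonance structure. For the same reason your closing claim that ``higher degree terms are even smaller and need only the crude count'' is false: the crude count applied to the single commutator $g_{F_1}^1(H_4)$ gives only $\eps^{1/2-1/p-}$ (order one in $\ell^2$), which is precisely why the paper introduces the second transformation $F_2$ and proves the refined bound $\eps^{a+b-\frac12-\frac1p-}$ for the iterated commutators with $a+b\geq2$. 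Without the divisor/resonance mechanism and the flow-derivative estimates, the proof does not close.
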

\noindent
Propositions~\ref{prop:nformest} and \ref{prop:error_estimate} imply Theorem~\ref{thm:main1}. 
Indeed, assume that $q(\cdot,0)\in B^\delta_{\eps,C}$ for some $\delta>0,$ $C>0$, and for sufficiently small $\eps\in(0,1)$.
Multiplying \eqref{eq:vevol} with $e^{-i(n^2+4P)t}$ and integrating over $t$, we obtain
\[
v(n,t)e^{-i(n^2+4P)t}-v(n,0) = \int_0^t e^{-i(n^2+4P)\tau}E(v)d\tau.
\]
This and Propositions~\ref{prop:nformest} and \ref{prop:error_estimate} imply, for each $p\in [1,\infty]$,  that
\begin{equation}\label{eq:vevolve}
 \|v(t)-e^{iLt}v(0)\|_{\ell^{p,\delta}}=\|v(t)e^{-iLt}-v(0)\|_{\ell^{p,\delta}}  \les t \, \eps^{\frac32-\frac{1}{p}-},
\end{equation}
where $L(v)(n)=(n^2+4P)v(n)$. Finally, Proposition~\ref{prop:nformest} and \eqref{eq:vevolve} imply, for $p\in[1,\infty]$,  that  
\begin{align*}
\|u(t)-e^{iLt}u(0)\|_{\ell^{p,\delta}} & \leq  \|u(t)-v(t)\|_{\ell^{p,\delta}}  +\|v(t)-e^{iLt}v(0)\|_{\ell^{p,\delta}} \\&+\|e^{iLt}v(0)-e^{iLt}u(0)\|_{\ell^{p,\delta}} \\
& \les  \la t \ra \, \eps^{\frac32-\frac{1}{p}-}, 
\end{align*}
where the implicit constant depends on $C$.
In particular, this yields the assertion of Theorem~\ref{thm:main1} as follows
\[
 \|q(t)-e^{it(\Delta+2P)}q(0)\|_2=\|u(t)-e^{iLt}u(0)\|_{\ell^{2}} \leq  \|u(t)-e^{iLt}u(0)\|_{\ell^{2,\delta}} \les \la t \ra \, \eps^{1-}.
\]

\vspace{10mm}
\noindent
{\bf \large Notation.} \\
We will frequently use convolution with $1/|n|$, which will be denoted by 
$
\rho(n)=\frac{1}{|n|} \chi_{\Z\backslash \{0\}}(n).
$
and we will also use the notation $\la n \ra = \sqrt{1+n^2}$.
\\
We always assume  by default that the summation index avoids the terms with vanishing denominators. \\
To avoid using unimportant constants, we will use $\les$ sign: \\
$A\les B$ means there is an absolute constant $K$ such that $A\leq K B$. In some cases the constant will depend on parameters such as $p$.\\
$A\les B(\eta-)$ means that for any $\gamma>0$, $A\leq C_\gamma B(\eta-\gamma)$. \\
$A\les B(\eta+)$ is defined similarly.

\section{Normal form calculations} \label{sec:norform}
Consider the change of variables $u_n \rightarrow v_n$, generated by
the time 1 flow of a purely imaginary Hamiltonian $F$. 
Namely, solve
\[
\frac{d w}{ds} = \frac{\D F}{\D \bar w},\,\,\,\,\,w|_{s=0}=v,
\]
 thus producing a symplectic
transformation  $u = u(v):=w|_{s=1}$.
Let $X_F^s$ be the time $s$ map of the flow of $F$. Using
Taylor expansion \cite{Kuk, KukPos}, we have
\begin{align}\label{eq:taylor}
H\circ X_F^1(v)& =H(v) + \{H,F\}(v)+ \ldots 
+ \frac{1}{k!}\{\ldots\{\{H,\underbrace{F\},F\},\ldots,F}_{k }\}(v)  \\
& + \int_0^1 \frac{(1-s)^k}{k!} \{\ldots\{\{H,\underbrace{F\},F\},\ldots,F}_{k+1}\}\circ X_F^s(v) \,ds,   \nonumber
\end{align}
where
\ba
\{A,B \} =  \sum_n  \left ( \frac{\D A}{\D u(n)} \frac{\D B}{\D \bar u(n)} -
           \frac{\D A}{\D \bar u(n)} \frac{\D B}{\D  u(n)} \right )
\ea
is the Poisson bracket.

Recall that $H$ has a quadratic and a quartic part
\begin{equation}\label{eq:hdecom}
H = \Lambda_2 + H_4,
\end{equation}
where
\begin{equation} 
\Lambda_2=i\sum m^2 |u(m)|^2.
\end{equation}
We write $H_4=H_4^{\rm nr}+H_4^{\rm r}$, where the superscripts ``nr'' and ``r'' denotes the non-resonant and resonant
terms:
\begin{align*}
H_4^{\rm nr} &=  i\sum_{ l(m)=0,\,q(m)\neq 0} v(m_1)v(m_2)\bar v(m_3) \bar v(m_4)\\
H_4^{\rm r} &=i \sum_{l(m)=0,\,q(m)=0} v(m_1)v(m_2)\bar v(m_3) \bar v(m_4),
\end{align*}
where $l(m)=m_1+m_2-m_3-m_4$ and $q(m)=m_1^2+m_2^2-m_3^2-m_4^2$. As usual $H_4^{\rm r}$ is the part of the Hamiltonian that commutes with $\Lambda_2$. 
Note that we can further decompose $H_4^{\rm r}$ as 
\begin{align*}
H_4^{\rm r}& =-i\sum_{m}|v(m)|^4 + 2i\sum_{m_1,m_2} |v(m_1)|^2 |v(m_2)|^2 :=H_4^{\rm r1}+H_4^{\rm r2}.
\end{align*}

We sequentially apply two normal form transformations generated by $F_1$ and $F_2$. We choose $F_1$  so that the following cancellation property holds
\begin{equation}\label{eq:lambdaF}
\{ \Lambda_2,F_1\} = -H_4^{\rm nr}.
\end{equation}
We will prove that $F_1$ commutes with $H_4^{\rm r_2}$. Using these cancellation properties in \eqref{eq:taylor} with $k=2$, we obtain
\begin{align*}
H\circ X_{F_1}^1&=\Lambda_2 + H_4^{\rm r1} + H_4^{\rm r2}+\{H_4^{\rm  r1},F_1\}+\frac12\{H_4^{\rm nr},F_1\}+\frac12 g_{F_1}^{2}( H_4)
\\ 
&+ \int_0^1 \frac{(1-s)^2}{2} 
g_{F_1}^{3}(H)\circ X_{F_1}^s \, ds,
\end{align*}
where we used the notation 
 \[
 g_F^0(H) =H, \,\,\,\,\,\,\,\, g_F^{k+1}(H)= \{g_F^k,F \},\,\,\,\, k=0,1,2,\ldots
 \]

Now, we apply the second transformation\footnote{\label{fn:gf1h}It turns out
that the transform generated by  $F_1$ is not enough since the term $\{H_4^{\rm nr},F_1\}$ is present in the Hamiltonian.  The direct estimate of this term produces finite order nonlinear effect (see Subsection~\ref{sec:gf2h4}).} generated by    $F_2$ to eliminate the non-resonant terms in 
$\frac{1}{2} \{ H_4^{\rm nr}, F_1 \}$, i.e.,
\begin{equation}
\{ \Lambda_2,F_2\} = -\frac{1}{2}\{H_4^{\rm nr},F_1\}^{\rm nr}.
\end{equation}
We will also prove that $F_2$ commutes with $H_4^{\rm r_2}$. Using these cancellation properties as above in \eqref{eq:taylor} (with $k=1$), we obtain
$$
H\circ X_{F_1}^1\circ X_{F_2}^1=\Lambda_2 +  H_4^{\rm r2}+ R,
$$
where
\begin{align} \nonumber
R&=  H_4^{\rm r1}   +\{H_4^{\rm  r1},F_1\}+\frac12\{H_4^{\rm nr},F_1\}^{\rm r}+\{H_4^{\rm  r1},F_2\}+\{\{H_4^{\rm r1},F_1\},F_2\}+K
\\ 
&+ \frac12\{\{H_4^{\rm nr},F_1\},F_2\}+\{K,F_2\}+ \int_0^1 (1-s) 
g_{F_2}^{2}(H\circ X_{F_1}^1)\circ X_{F_2}^s \, ds, \nonumber
\end{align}
where 
$$
K=\frac12 g_{F_1}^{2}( H_4) + \int_0^1 \frac{(1-s)^2}{2} 
g_{F_1}^{3}(H)\circ X_{F_1}^s \, ds.
$$

The transformed evolution equation is given by
\begin{equation}\label{eq:trans_evol}
\dot v(n) = \frac{\D (H\circ X_{F_1}^1\circ X_{F_2}^1) }{\D \bar v}. 
\end{equation}
Note that contribution of the ``leading'' terms,  $\Lambda_2+H_4^{\rm r2}$, is given by
\[
\frac{\D }{\D \bar v(n)}\Big(i\sum m^2 |v(m)|^2 +  2i\sum_{m_1,m_2} |v(m_1)|^2 |v(m_2)|^2\Big)=i (n^2 + 4 P) v(n).
\]
Therefore, we can rewrite \eqref{eq:trans_evol} as
\begin{equation}\label{eq:trans_evol1}
\dot v(n) = i (n^2 + 4 P) v(n)+E(v)(n), 
\end{equation}
where

\begin{equation}\label{eq:remain}
E(v)(n)= \frac{\D R}{\D \bar v(n) }.
\end{equation}

\subsection{Calculation of $F_1$ and $F_2$} \label{sec:f1f2}
To obtain \eqref{eq:lambdaF}, we take $F_1$ of the form
\be
F_1 = \sum_{l(m)=0} f(m_1,m_2,m_3,m_4) v(m_1)v(m_2) \bar
v(m_3) \bar v(m_4).  \nonumber
\ee
We have
\begin{multline} \nonumber
\{\Lambda_2,F_1  \} = i \sum_m m^2 \left ( \bar v(m) \frac{\D F_1}{\D \bar v(m)}-
v(m) \frac{\D F_1}{\D  v(m)}   \right ) \\
= i\sum_{l(m)=0} (m_1^2+m_2^2-m_3^2-m_4^2)f(m_1,m_2,m_3,m_4) v(m_1)v(m_2)\bar v(m_3) \bar v(m_4). 
\end{multline} 
Therefore, we let 
\begin{equation}\label{eq:F2ndform}
F_1 \!=\! \sum_{l(m)=0}\!\!
\frac{v(m_1)v(m_2)\bar v(m_3) \bar v(m_4)}{m_1^2+m_2^2-m_3^2-m_4^2} \!=\!\sum_{l(m)=0}\!\! \frac{v(m_1)v(m_2)\bar v(m_3)\bar v(m_4)}{2(m_1-m_3)(m_2-m_3)}.
\end{equation}
Now, we calculate $F_2$.  Using the Hamiltonian structure\footnote{These identities
follow from the following easily checked ones:\\
$\Re{(H)}=0$, $\D_v H(v,\bar v)+\D_v \bar H(v,\bar v)=0$ and  
$\D_v \bar H(v,\bar v) = \overline{\D_{\bar v} H(v,\bar v)}$.}
\[\frac{\D H} {\D \bar v(n)} = -\overline{\frac{\D H}{\D v(n)}},\;\;\;\;\;\;\;
\frac{\D F_2} {\D \bar v(n)} = -\overline{\frac{\D F_2}{\D v(n)}}
\]
we obtain
\[
\{H_4^{\rm nr},F_1 \}^{\rm nr}=2 i \!\!\!\!\!\!\sum_{\stackrel{ m_4,m_5\neq m_1,\,\,m_2,m_3\neq m_6}{l(m)=0,\,\,q(m)\neq 0}}\!\!\!\frac{v(m_1)v(m_2)v(m_3)\bar v(m_4)\bar v(m_5)\bar v(m_6)}
{(m_2-m_6)(m_3-m_6)} -c.c.
\]
Therefore, a calculation similar to the one for $F_1$ yields 
\begin{equation}\label{eq:F2form}
F_2 = \sum_{\stackrel{ m_4,m_5\neq m_1,\,\,m_2,m_3\neq m_6}{l(m)=0,\,\,q(m)\neq 0}} \frac{ 
v(m_1)v(m_2)v(m_3)\bar v(m_4)\bar v(m_5)\bar v(m_6)}{q(m)(m_2-m_6)(m_6-m_3)}-c.c.
\end{equation}
Here, $l(m)=m_1+m_2+m_3-m_4-m_5-m_6$, and $q(m)=m_1^2+m_2^2+m_3^2-m_4^2-m_5^2-m_6^2$.

\subsection{Proof of Proposition~\ref{prop:nformest}}\label{sec:normform}

First we state a simple corollary of Young's inequality. Recall that $\rho(n)=1/|n|$ for $n\neq 0$ and $\rho(0)=0$.
\begin{lem}\label{lem:young}
For any $p>1$, for any choices of $\pm$ signs
$$\big\|\sum_j w(\pm n \pm j) \rho(\pm j)\big\|_{\ell^p_n} \les \|w\|_{\ell^{p-}}.$$
With some abuse of notation, we denote each sum of the above form   by  $w*\rho$.
\end{lem}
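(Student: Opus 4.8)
The plan is to reduce every choice of signs to a single genuine convolution and then run the discrete Young inequality, giving away an arbitrarily small amount of integrability because $\rho\notin\ell^1$ but $\rho\in\ell^{s}$ for every $s>1$.

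First I would normalize the inner sum. Since $\rho$ is even, $\rho(\pm j)=\rho(j)$, and replacing the summation variable $j$ by $-j$ converts $w(\pm n+j)$ into $w(\pm n-j)$ without touching $\rho$; so in every case the sum equals $\sum_j w(\sigma n-j)\rho(j)$ for some fixed $\sigma\in\{+1,-1\}$. Writing $\tilde w(m):=w(-m)$ and $(f*g)(n):=\sum_j f(n-j)g(j)$, this is $(w*\rho)(n)$ when $\sigma=+1$ and $(\tilde w*\rho)(n)$ when $\sigma=-1$. Because $n\mapsto-n$ is an isometry of each $\ell^p_n$ and $\|\tilde w\|_{\ell^q}=\|w\|_{\ell^q}$ for every $q$, it suffices to prove $\|w*\rho\|_{\ell^p_n}\les\|w\|_{\ell^{p-}}$.

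Then I would apply Young's convolution inequality $\|w*\rho\|_{\ell^p}\le\|w\|_{\ell^r}\|\rho\|_{\ell^s}$, valid for $r,s\ge1$ with $\tfrac1r+\tfrac1s=1+\tfrac1p$. Here $\|\rho\|_{\ell^s}^{s}=2\sum_{n\ge1}n^{-s}=2\zeta(s)<\infty$ exactly when $s>1$, so $s=1$ is forbidden; instead, given $p>1$, I pick a small $\gamma\in(0,p-1)$, set $r=p-\gamma\ge1$, and let $s=s(\gamma)$ be defined by $\tfrac1s=1+\tfrac1p-\tfrac1r=1-\tfrac{\gamma}{p(p-\gamma)}$, which lies in $(0,1)$ for $\gamma$ small, so $s>1$ and $\|\rho\|_{\ell^s}=(2\zeta(s))^{1/s}<\infty$. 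This yields $\|w*\rho\|_{\ell^p}\le(2\zeta(s(\gamma)))^{1/s(\gamma)}\,\|w\|_{\ell^{p-\gamma}}$, i.e.\ precisely $\les\|w\|_{\ell^{p-}}$ with the constant depending on $p$ and on the sacrificed $\gamma$; and since $\ell^{p-\gamma}\hookrightarrow\ell^{p-\gamma'}$ on $\Z$ for $\gamma\ge\gamma'$, the estimate for small $\gamma$ automatically gives it for all $\gamma>0$.

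There is no real obstacle here; the only points worth stating carefully are that the loss $\ell^p\to\ell^{p-}$ is genuinely necessary (since $\rho\notin\ell^1$) and that one must keep the Young exponents inside $[1,\infty]$, which is exactly where the hypothesis $p>1$ enters. The closing ``abuse of notation'' remark is merely the declaration that each of the four sign patterns will henceforth be denoted $w*\rho$, and requires no argument.
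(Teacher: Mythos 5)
Your proof is correct and follows essentially the same route as the paper: discrete Young's inequality $\|w*\rho\|_{\ell^p}\le\|w\|_{\ell^r}\|\rho\|_{\ell^s}$ with $\tfrac1r+\tfrac1s=1+\tfrac1p$, exploiting that $\rho\in\ell^s$ for every $s>1$ and sacrificing an arbitrarily small amount of integrability in $w$. The paper states this in two lines; you have merely made explicit the reduction of the sign patterns to a genuine convolution and the exponent bookkeeping, both of which are fine.
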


\begin{proof}
Recall that by Young's inequality, 
$\|w*\rho\|_{\ell^p}\les \|w\|_{\ell^{q}}\|\rho\|_{\ell^r}$, where $1+\frac1p=\frac1q+\frac1r$.
The lemma follows since  $\rho\in \ell^q$ for any $q>1$.
\end{proof}

\begin{proof}[Proof of Proposition~\ref{prop:nformest}]
First note that the equality of the $\ell^2$ norms follows from Hamiltonian formalism.  Indeed, it is 
straightforward to verify that $\{F,Q\} =0$ (where $Q(u)=\|u\|_2^2$), which implies $\ell^2$ norm conservation. To prove the second statement, we should estimate the time 1 map of the flow of $F_1$ and of $F_2$. We start with $F_1$.
\begin{align}
\frac{d w(n)}{d s} = \frac{\D F_1}{\D \bar w(n)}  =
\sum_{m_1+m_2-m_3-n=0}
 \frac{w(m_1)w(m_2)\bar w(m_3)}{(m_1-n)(m_2-n)}.  \label{eq:normfeq2}
\end{align}
Multiplying with $e^{\delta|n|}$, we estimate (assuming that $w\in B^\delta_{\eps,C}$)
\begin{align*}
 &\big|e^{\delta|n|}\frac{d w(n)}{ds} \big|  \leq  \\ 
 &\leq \!\!\!\!  \sum_{m_1+m_2-m_3-n=0}
\frac{e^{-\delta(|m_1|+|m_2|+|m_3|-|n|)}}{|m_1-n||m_2-n|}
|w(m_1)e^{\delta|m_1|}w(m_2)e^{\delta|m_2|}  w(m_3)e^{\delta|m_3|}|
\nonumber\\
&\leq  \|w\|_{\ell^{\infty,\delta}} \sum_{m_1,m_2}\frac{|w(m_1)|e^{\delta|m_1|}
|w(m_2)|e^{\delta |m_2|}}{|m_1-n| |m_2-n|} \leq  \|w\|_{\ell^{\infty,\delta}}
[|w| e^{\delta |\cdot|}*\rho]^2(n).
\end{align*}
In the second line, we used the fact that  $|m_1|+|m_2|+|m_3|-|n| \geq 0$. 
Therefore, by Lemma~\ref{lem:young}, we obtain
\[
\big \|\frac{d w}{ds}  \big\|_{\ell^{\infty,\delta}} \leq \|w\|_{\ell^{\infty,\delta}}
\||w| \, e^{\delta|\cdot|}*\rho\|_{\ell^{\infty}}^2 \leq \|w\|_{\ell^{\infty,\delta}}
\|w\|_{\ell^{q,\delta}}^2
\]
 for any $1\leq q<\infty$.
Similarly, using Lemma~\ref{lem:young}, we obtain
\[
\big\|\frac{d w}{ds} \big\|_{\ell^{1,\delta}} \leq \|w\|_{\ell^{\infty,\delta}}
\| |w| \, e^{\delta |\cdot|} * \rho\|_{2}^2 \les \|w\|_{\ell^{\infty,\delta}}
\|w\|_{\ell^{2-,\delta}}^2.
\]
The last two inequalities imply that if $w(0) \in B^\delta_{\eps,C}$ (or $w(1) \in B^\delta_{\eps,C}$) then 
\[
\|w(s)-w(0)\|_{\ell^{\infty,\delta}} \les \eps^{\frac{3}{2} - }, \;\;\;\;\;
\|w(s)-w(0)\|_{\ell^{1,\delta}} \les \eps^{\frac12-}.
\]
This completes the proof for  $F_1$. In the proof for $F_2$, we omit some of the details, in particular the multiplication with $e^{\delta|n|}$ argument above, since it works exactly in the same way. To estimate the $\ell^p$ norm of the right hand side of 
$$
\frac{d w(n)}{d s} = \frac{\D F_2}{\D \bar w(n)},
$$
we use duality:
\begin{equation} \label{eq:dual}
\Big\|\frac{\D F_2}{\D \bar w(n)}\Big\|_{\ell^p}=\sup_{\|h\|_{\ell^{p'}}=1}\Big|\sum h(n)  \frac{\D F_2}{\D \bar w(n)}\Big|.
\end{equation}
Note that the right hand side of \eqref{eq:dual} can be estimated by the sum of six terms of the form 
\begin{align} \label{eq:6terms}
\tilde F_2(w_1,\ldots, w_6):=\sum_{\stackrel{ m_4,m_5\neq m_1,\,\,m_2,m_3\neq m_6}{l(m)=0,\,\,q(m)\neq 0}} \frac{ 
w_1(m_1)\cdots w_6(m_6)}{|q(m)||m_2-m_6||m_6-m_3|},
\end{align} 
where in the $j$th term $w_j=|h|$ and the others are $|v|$.
The required estimates for these terms follow by applying Lemma~\ref{lem:F2} below with arbitrarily small $\eta$ and with $i=j$ if $p'=1$ and with $k=j$ if $p'=\infty$.
\end{proof}
\begin{lem} \label{lem:F2} 
For any $\eta>0$ and for any distinct $i,k \in \{1,2,3,4,5,6\}$, there is a permutation $(i_1,i_2,i_3,i_4)$ of the remaining indices  such that
$$
\tilde F_2(w_1,\ldots,w_6)\lesssim \|w_i\|_{\ell^1} \|w_k\|_{\ell^{\infty}} \|w_{i_1}\|_{\ell^1}\prod_{l=2}^4\|w_{i_l}\|_{\ell^{\infty}}^{\frac{1}{1+\eta}}\|w_{i_l}\|_{\ell^1}^{\frac{\eta}{1+\eta}}.
$$
\end{lem}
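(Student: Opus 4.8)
The plan is to estimate the sum $\tilde F_2(w_1,\dots,w_6)$ defined in \eqref{eq:6terms} by first summing out the ``cheap'' variables against the factors of $\rho$, and then controlling the remaining low-dimensional sum by a mix of $\ell^1$ and $\ell^\infty$ norms, with H\"older interpolation used to distribute the loss. Recall that on the constraint set $l(m)=0$, $q(m)\neq 0$, we have six frequencies $m_1,\dots,m_6$ with one linear relation, so effectively five free summation indices, and the kernel decays like $\tfrac{1}{|q(m)|}\cdot\rho(m_2-m_6)\,\rho(m_6-m_3)$. The factor $\rho(m_2-m_6)\rho(m_6-m_3)$ is exactly of the convolution-with-$1/|n|$ type handled by Lemma~\ref{lem:young}, so the first move is to absorb two of the frequencies (say $m_2$ and $m_3$, viewed relative to $m_6$) via two applications of that lemma, at the cost of replacing two $\ell^1$-ish norms by $\ell^{1-}$ norms — this is where the $\eta$-loss enters.

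Concretely, I would fix the two distinguished indices $i$ (to be measured in $\ell^1$) and $k$ (to be measured in $\ell^\infty$) as in the statement, and then argue by cases according to which of $m_2,m_3,m_6$ coincide with $i$ or $k$. In the generic case where $i,k\notin\{2,3,6\}$: pull $\|w_k\|_{\ell^\infty}$ out immediately; use the factor $\tfrac1{|q(m)|}\le 1$ crudely (or, if a gain is needed, note $q(m)\neq 0$ gives $|q(m)|\ge 1$); then the remaining sum over $m_2,m_3,m_6$ with the two $\rho$ factors is a double convolution, bounded by Lemma~\ref{lem:young} by $\|w_2\|_{\ell^{1+}}\|w_6\|_{\ell^\infty}\|w_3\|_{\ell^{1+}}$ type expressions after choosing which variable plays the role of the fixed index; the leftover indices among $\{1,4,5\}$ are then summed in $\ell^1$, one of them being $w_i$. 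Interpolating $\|w\|_{\ell^{1+\text{small}}}\le \|w\|_{\ell^1}^{\eta/(1+\eta)}\|w\|_{\ell^\infty}^{1/(1+\eta)}$ (Young/H\"older interpolation between $\ell^1$ and $\ell^\infty$) on the three variables that got a loss produces exactly the product $\prod_{l=2}^4\|w_{i_l}\|_{\ell^\infty}^{1/(1+\eta)}\|w_{i_l}\|_{\ell^1}^{\eta/(1+\eta)}$ in the claimed bound, with the permutation $(i_1,i_2,i_3,i_4)$ recording which of the remaining four indices is placed in pure $\ell^1$ versus the three interpolated slots.

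The bookkeeping-heavy part is the \emph{case analysis}: when $i$ or $k$ happens to equal one of $2,3,6$, the roles of ``convolution variable'' and ``distinguished variable'' collide, and one must reorganize which pair of frequencies is summed against the $\rho$-kernels. For instance if $k=6$, then $w_6$ is already in $\ell^\infty$ and the two convolutions in $m_2,m_3$ decouple nicely; if $i=2$, one instead sums $m_6$ first against one $\rho$ factor, keeping $w_2$ in $\ell^1$, and so on. In every configuration the same two tools — Lemma~\ref{lem:young} for the $\rho$-convolutions and $\ell^1$--$\ell^\infty$ interpolation for the residual loss — suffice; only the assignment of indices changes, which is why the statement is phrased in terms of an existentially quantified permutation.

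I expect the main obstacle to be purely organizational: verifying that \emph{for every} choice of distinct $i,k$ there is a consistent assignment in which the two $\rho$-factors can be summed (i.e. that $m_2-m_6$ and $m_6-m_3$ can always be made into genuine convolution differences over free indices after removing $w_i,w_k$), and that the exclusions $m_4,m_5\neq m_1$ and $m_2,m_3\neq m_6$ never obstruct Lemma~\ref{lem:young} (they only delete finitely many diagonal terms, which is harmless). The factor $1/|q(m)|$ plays no essential role beyond being bounded; all the real work is in the $\rho$-convolutions, and the $\eta$ is precisely the price of Lemma~\ref{lem:young}'s $\ell^{p}\to\ell^{p-}$ loss applied twice.
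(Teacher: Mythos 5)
There is a genuine gap, and it is exactly at the point you dismiss: the factor $1/|q(m)|$ is \emph{not} merely bounded; its decay is what makes the lemma true. Count degrees of freedom: on $l(m)=0$ there are five free summation indices, and the two kernels $\rho(m_2-m_6)$, $\rho(m_6-m_3)$ can pay for only two of them. In your generic case you bound $w_k$ in $\ell^\infty$, discard $1/|q(m)|\le 1$, and are then forced to sum the two remaining indices from $\{1,4,5\}$ and one index from $\{2,3,6\}$ against $\ell^1$-type norms --- three $\ell^1$'s in total. The claimed bound allows only \emph{two} full $\ell^1$ norms (for $w_i$ and $w_{i_1}$); the other three factors are essentially $\ell^\infty$ (the $\ell^1$ exponent $\eta/(1+\eta)$ is tiny and cannot pay for a summation). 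Since for the relevant data $\|w\|_{\ell^1}/\|w\|_{\ell^\infty}\sim\eps^{-1}$, your bound is weaker by a factor $\eps^{-1}$, and with that loss the $F_2$-estimates in Proposition~\ref{prop:nformest} and Section~4 would only give $\eps^{\frac12-\frac1p-}$, i.e.\ no better than a single $F_1$ commutator, defeating the purpose of the second normal form. So the proposal, as written, cannot yield the stated inequality.

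The missing idea in the paper's proof is the following: after the H\"older split with exponents $1+\eta$ (which, incidentally, is where the interpolated norms $\|w_{i_l}\|_{\ell^\infty}^{1/(1+\eta)}\|w_{i_l}\|_{\ell^1}^{\eta/(1+\eta)}$ come from --- not from the $\ell^{p}\to\ell^{p-}$ loss in Lemma~\ref{lem:young}), one uses $l(m)=0$ to eliminate one index $m_{j_1}\in\{1,4,5\}$ and then observes that $q(m)$, as a function of a second index $m_{j_2}\in\{1,4,5\}$ with the rest frozen, is a nonconstant linear polynomial (case $j_2=4$: $q=C_1m_4+C_2$ with $C_1\neq0$ precisely because of the exclusion $m_1\neq m_4$) or a genuine quadratic (case $j_2=5$). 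Hence $\sup\||q(m)|^{-1-\eta}\|_{\ell^1_{m_{j_2}}}\lesssim1$, and this sum pays for the fifth free variable with only $\ell^\infty$ control on $w_{j_2}$. That is how the count closes: two $\rho$-kernel sums, one $q$-sum, two $\ell^1$ sums. Note also that the exclusions $m_4,m_5\neq m_1$ are not the harmless deletion of diagonal terms you suggest --- they are what guarantees the nondegeneracy of $q(m)$ in the summed variable.
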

\begin{proof} Fix $\eta>0$, $i$, and  $k$. By Holder's inequality we have 
\begin{align}\label{eq:F2tilde}
\tilde F_2  &\leq \big[ \sum_{\stackrel{ m_4,m_5\neq m_1,\,\,m_2,m_3\neq m_6}{l(m)=0,\,\,q(m)\neq 0}} \frac{ 
w_1(m_1)\cdots w_6(m_6)}{|q(m)|^{1+\eta}|m_2-m_6|^{1+\eta}|m_6-m_3|^{1+\eta}}\Big]^\frac{1}{1+\eta}\\
&\times \big[ \sum_{l(m)=0}   
w_1(m_1)\cdots w_6(m_6) \Big]^\frac{\eta}{1+\eta}. \nonumber
\end{align}
The second line is bounded by 
$$\|w_k\|_{\ell^\infty}^\frac{\eta}{1+\eta} \prod_{l=1, l\neq k}^6\|w_l\|_{\ell^{1}}^\frac{\eta}{1+\eta}.$$
The required estimate for the sum in the first line follows from the following claim:
For any permutation $(j_1,j_2,j_3)$ of $\{1,4,5\}$, and for any permutation $(n_1,n_2,n_3)$ of $\{2,3,6\}$, we have
\begin{multline}\label{eq:F2sum1}
 \sum_{\stackrel{ m_4,m_5\neq m_1,\,\,m_2,m_3\neq m_6}{l(m)=0,\,\,q(m)\neq 0}} \frac{ 
w_1(m_1)\cdots w_6(m_6)}{|q(m)|^{1+\eta}|m_2-m_6|^{1+\eta}|m_6-m_3|^{1+\eta}}
\les \\
 \les \|w_{j_1}\|_{\ell^\infty}\|w_{j_2}\|_{\ell^\infty}\|w_{j_3}\|_{\ell^1}\|w_{n_1}\|_{\ell^\infty}\|w_{n_2}\|_{\ell^\infty}
\|w_{n_3}\|_{\ell^1}.
\end{multline}
To prove this inequality, replace $m_{j_1}$ in the sum with a linear combination of other indices using the identity $l(m)=0$. 
We claim that 
$$\||q(m)|^{-1-\eta}\|_{\ell^{1}_{m_{j_2}}}\les 1,$$
where the implicit constant is independent of the remaining indices. Indeed, it suffices to consider the cases $j_1=1$, $j_2=4$ and $j_1=4$, $j_2=5$ since $m_4$ and $m_5$ enter symmetrically. In the former case
\begin{align*}
q(m)&=(m_4+m_5+m_6-m_2-m_3)^2+m_2^2+m_3^2-m_4^2-m_5^2-m_6^2 \\& =C_1 m_4+C_2,
\end{align*}
where the integers $C_1, C_2$ depend on $m_2,m_3,m_5,m_6$. Moreover, $C_1\neq 0$ since $m_1\neq m_4$. Therefore,
$$
\sup_{m_2,m_3,m_5,m_6} \||q(m)|^{-1-\eta}\|_{\ell^{1}_{m_4}}\lesssim 1.
$$
In the latter case
\begin{align*}
q(m)=C_1+C_2 m_5 - 2 m_5^2,
\end{align*}
where the integers $C_1, C_2$ depend on $m_1,m_2,m_3,m_6$. Since for any integers $n, C_1, C_2$, the equation $n= C_1+C_2 m_5 - 2 m_5^2$ has at most two solutions, we have 
$$
\sup_{m_1,m_2,m_3,m_6} \||q(m)|^{-1-\eta}\|_{\ell^{1}_{m_5}}\lesssim 1.
$$
Using this claim, we obtain
\begin{align}\nonumber
\eqref{eq:F2sum1} & \les \|w_{j_1}\|_{\ell^{\infty}}\|w_{j_2}\|_{\ell^{\infty}}\|w_{j_3}\|_{\ell^1}
\sum
\frac{w_2(m_2)w_3(m_3) w_6(m_6)}{|m_2-m_6|^{1+\eta}|m_6-m_3|^{1+\eta}}\\ \nonumber
&\leq \|w_{j_1}\|_{\ell^{\infty}}\|w_{j_2}\|_{\ell^{\infty}}\|w_{j_3}\|_{\ell^1}
\|w_{n_1}\|_{\ell^{\infty}}\|w_{n_2}\|_{\ell^{\infty}}
\sum
\frac{w_{n_3}(m_{n_3})}{|m_2-m_6|^{1+\eta}|m_6-m_3|^{1+\eta}}\\
  \nonumber
&\les \|w_{j_1}\|_{\ell^{\infty}}\|w_{j_2}\|_{\ell^{\infty}}\|w_{j_3}\|_{\ell^1}
\|w_{n_1}\|_{\ell^{\infty}}\|w_{n_2}\|_{\ell^{\infty}}
 \|w_{n_3}\|_{\ell^1}.
\end{align}
 
\end{proof}

\subsection{Cancellation property of $H_4^{\rm r2}$}

We claim that
$
\{H_4^{\rm r2},F_j  \} = 0, j=1,2.
$
Indeed, by (\ref{eq:F2ndform}) and \eqref{eq:F2form}, both $F_1$ and $F_2$ have the phase invariant property
\[
F_j(v) = F_j(ve^{i\phi}),
\]
but the evolution induced by $H_4^{\rm r2}$ is just uniform phase rotation,
\[
v(n,t) = e^{i 2 P t} v(n,0).
\]
Thus,
\[
\{ H_4^{\rm r2},F_j \}:= \frac{d}{dt}  F_j(X_{H_4^{\rm r2}}^{t=0})=0,\,\,\,\,\,j=1,2.
\]

\section{Proof of Proposition~\ref{prop:error_estimate}}
Assuming that $v\in B^\delta_{\eps, C}$, we should prove that the $\ell^{p,\delta}$ norm of each of the summands in \eqref{eq:remain} is $\les \eps^{3/2-1/p-}$ for $p=1$ and $p=\infty$. To simplify the exposition, we will do this only in the case $\delta=0$. The proof for the case $\delta>0$ is similar by using the simple multiplication by $e^{\delta|\cdot|}$ argument we used in the proof of Proposition~\ref{prop:nformest}.

Note that it suffices to consider the $\D_{\bar v(k)}$ derivatives of the following terms
\begin{align*}
 H_4^{\rm r1},\,\,\, \{H_4^{\rm  r1},F_1\}, \,\,\,\{H_4^{\rm  r1},F_2\}, \,\,\,\{H_4^{\rm nr},F_1\}^{\rm r},\,\,\,
 g_{F_2}^bg_{F_1}^a(H_4), \,\, \,a+b\geq 2,
\end{align*}
and the terms involving integrals. 

We define
$$
f_1(v_1,v_2,v_3)(k):=\sum_{m_1,m_2\neq k}
 \frac{v_1(m_1)v_2(m_2) \index{\footnote{}}v_3(m_1+m_2-k)}{(m_1-k)(m_2-k)}
$$
so that $f_1(v,v,\bar v)(k)= \D_{\bar v(k)} F_1$.
Similarly we define $f_2(v_1,v_2,v_3,v_4,v_5)(k)$ so that $f_2(v,v,v,\bar v,\bar v)(k)= \D_{\bar v(k)} F_2$.
The following Lemma will be used repeatedly: 
\begin{lem} \label{lem:f1f2} 
I) For any $q\in[1,\infty]$ and any permutation $(i_1,i_2,i_3)$ of $(1,2,3)$, we have
$$\|f_1(v_1,v_2,v_3)\|_{\ell^q}\lesssim \|v_{i_1}\|_{\ell^q}\|v_{i_2}\|_{\ell^{\infty-}}\|v_{i_3}\|_{\ell^{\infty-}}. 
$$
II) For any $q \in [1,\infty]$, for any $\eta>0$, and for any $i\in\{1,2,3,4,5\}$ there is a permutation $(i_1,i_2,i_3,i_4)$ of the set $\{1,2,3,4,5\}\backslash \{i\}$ such that
$$
\|f_2(v_1,v_2,v_3,v_4,v_5)\|_{\ell^q}\lesssim \|v_i\|_{\ell^q} \|v_{i_1}\|_{\ell^1}
\prod_{l=2}^4\|v_{i_l}\|_{\ell^{\infty}}^{\frac{1}{1+\eta}}\|v_{i_l}\|_{\ell^1}^{\frac{\eta}{1+\eta}}.
$$
\end{lem}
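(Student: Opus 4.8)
The plan is to deduce both parts from the two estimates already in hand --- Young's inequality in the form of Lemma~\ref{lem:young} for part~I), and Lemma~\ref{lem:F2} for part~II) --- by first proving the claimed bounds at the endpoints $q=1$ and $q=\infty$ and then filling in $q\in(1,\infty)$ by Riesz--Thorin interpolation applied in the single slot that is supposed to carry the $\ell^q$ norm. Since $f_1$ is linear in each of $v_1,v_2,v_3$ separately, and $f_2$ in each $v_j$, freezing the other arguments turns this into a genuine interpolation of a bounded linear operator, with the remaining norms playing the role of the operator constant.

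For part~I), since $f_1(v_1,v_2,v_3)=f_1(v_2,v_1,v_3)$ it is enough to treat $i_1=1$ and $i_1=3$, and in both cases the two remaining slots carry $\ell^{\infty-}$ norms so their order is immaterial. The one computation I need is the relabelling $(\alpha,\beta,\gamma)=(m_1,m_2,m_1+m_2-k)$, under which $m_1-k=\gamma-\beta$ and $m_2-k=\gamma-\alpha$; taking absolute values it turns the $q=1$ estimate into
\[
\|f_1(v_1,v_2,v_3)\|_{\ell^1}\le\sum_{\alpha,\beta,\gamma}|v_1(\alpha)|\,|v_2(\beta)|\,|v_3(\gamma)|\,\rho(\gamma-\alpha)\,\rho(\gamma-\beta)=\big\langle |v_3|,\ (|v_1|*\rho)(|v_2|*\rho)\big\rangle .
\]
If $i_1=3$ I bound $(|v_1|*\rho)(|v_2|*\rho)$ in $\ell^\infty$ by $\|v_1\|_{\ell^{\infty-}}\|v_2\|_{\ell^{\infty-}}$ via Lemma~\ref{lem:young}, leaving the factor $\|v_3\|_{\ell^1}$; if $i_1=1$ I instead pull $\sum_\alpha|v_1(\alpha)|=\|v_1\|_{\ell^1}$ outside and bound the remaining $\sup_\alpha\sum_{\beta,\gamma}|v_2(\beta)||v_3(\gamma)|\rho(\gamma-\alpha)\rho(\gamma-\beta)\le\|\,|v_2|*\rho\,\|_{\ell^\infty}\,\|\,|v_3|*\rho\,\|_{\ell^\infty}\lesssim\|v_2\|_{\ell^{\infty-}}\|v_3\|_{\ell^{\infty-}}$. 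At $q=\infty$ one of the two summations inside $f_1$ absorbs into a $|v|*\rho$ (Lemma~\ref{lem:young} with $p=\infty$): for $i_1=3$ one gets $|f_1(k)|\le\|v_3\|_{\ell^\infty}(|v_1|*\rho)(k)(|v_2|*\rho)(k)$, and for $i_1=1$ one gets $|f_1(k)|\le\|v_1\|_{\ell^\infty}\big(\big(|v_2|\,(|v_3|*\rho)\big)*\rho\big)(k)$; in both cases one more application of Lemma~\ref{lem:young}, together with H\"older to split the product $|v_2|\,(|v_3|*\rho)$ before applying it, gives the bound. Interpolating in the $v_{i_1}$ slot between these two endpoints then yields part~I).

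For part~II), the Hamiltonian computation that produced \eqref{eq:F2form} shows that $f_2(v_1,\dots,v_5)(k)$ is a sum of boundedly many $5$-linear forms, each of the same shape as the summand in \eqref{eq:F2form} up to interchanging some of the $v$'s and $\bar v$'s. By duality $\|f_2\|_{\ell^q}=\sup\{|\sum_k h(k)f_2(v_1,\dots,v_5)(k)|:\ \|h\|_{\ell^{q'}}\le 1\}$, and for each of those pieces the pairing $\sum_k h(k)f_2(\dots)(k)$, after passing to absolute values, is exactly a sum of the form $\tilde F_2(w_1,\dots,w_6)$ of \eqref{eq:6terms} with one $w$ equal to $|h|$ and the other five equal to the $|v_j|$'s. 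I then apply Lemma~\ref{lem:F2} with $\eta$ arbitrarily small: at $q=\infty$ (so $h\in\ell^1$) with $i$ the slot of $h$ and $k$ the slot of $v_i$, which gives $\|f_2\|_{\ell^\infty}\lesssim\|v_i\|_{\ell^\infty}\|v_{i_1}\|_{\ell^1}\prod_{l=2}^4\|v_{i_l}\|_{\ell^\infty}^{1/(1+\eta)}\|v_{i_l}\|_{\ell^1}^{\eta/(1+\eta)}$ (using $\|h\|_{\ell^1}\le1$); and at $q=1$ (so $h\in\ell^\infty$) with the two distinguished slots interchanged, which gives the same bound with $\|v_i\|_{\ell^1}$ replacing $\|v_i\|_{\ell^\infty}$. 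Interpolating in the $v_i$ slot between the endpoints produces the stated inequality for every $q\in[1,\infty]$.

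The hard part is getting the \emph{clean} $\ell^q$ norm --- not $\ell^{q-}$ --- on $v_{i_1}$ in part~I) and on $v_i$ in part~II), because in $f_1$ the index of that factor sits next to a $\rho$-weight (and in $f_2$ next to $q(m)$), so a careless Young estimate would cost an $\ell^{q-}$. The whole point of routing through the endpoints is to avoid this: at $q=1$ the distinguished factor is extracted in $\ell^1$ \emph{before} any convolution with $\rho$ is performed, at $q=\infty$ it is extracted in $\ell^\infty$ beforehand, and interpolation then supplies the intermediate exponents without ever convolving that factor against $\rho$. The only other thing to watch is that the splitting used in part~I) and the permutation returned by Lemma~\ref{lem:F2} in part~II) be taken consistently at the two endpoints, so that interpolation delivers a single admissible choice; this is available because the two $\ell^{\infty-}$ slots in part~I) are interchangeable and because Lemma~\ref{lem:F2} leaves us free to place either distinguished index in either of the groups $\{1,4,5\}$, $\{2,3,6\}$.
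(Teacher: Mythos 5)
Part I of your argument is sound and is a carefully worked-out version of what the paper merely gestures at ("follow the proof of Proposition~\ref{prop:nformest}"): your change of variables, the two endpoint bounds via Lemma~\ref{lem:young}, and Riesz--Thorin (equivalently a Schur test) in the distinguished slot do give the clean $\ell^q$ norm, avoiding the $\ell^{q-}$ loss that a direct Young estimate would incur. Part II also follows the paper's intended route (duality, Lemma~\ref{lem:F2} at the endpoints $q=1,\infty$, then interpolation in the $v_i$ slot), and you correctly identify the delicate point; but the way you dispose of it is not justified. Lemma~\ref{lem:F2} does let you take the two distinguished indices $i,k$ anywhere, but the permutation it \emph{returns} is constrained: by \eqref{eq:F2sum1} the first factor must place exactly one $\ell^1$ index in each of the groups $\{1,4,5\}$ and $\{2,3,6\}$, so the second full-$\ell^1$ index $i_1$ is forced into the group opposite to the index carrying the full $\ell^1$ norm. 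Between your two endpoint applications that full-$\ell^1$ index switches from the $h$-slot ($q=\infty$) to the $v_i$-slot ($q=1$). Since $\partial_{\bar v(k)}F_2$ contains terms in which the differentiated index is any of $m_4,m_5,m_6$ (and, from the complex conjugate part, $m_1,m_2,m_3$), there are summands in which the $h$-slot and the $v_i$-slot lie in \emph{different} groups; for those, the $q=1$ and $q=\infty$ applications of Lemma~\ref{lem:F2} cannot share the same $i_1$, and interpolating mismatched endpoint constants produces factors of the form $\|v_a\|_{\ell^1}^{1/q}\big(\|v_a\|_{\ell^\infty}^{1/(1+\eta)}\|v_a\|_{\ell^1}^{\eta/(1+\eta)}\big)^{1-1/q}\|v_b\|_{\ell^1}^{1-1/q}\big(\|v_b\|_{\ell^\infty}^{1/(1+\eta)}\|v_b\|_{\ell^1}^{\eta/(1+\eta)}\big)^{1/q}$, which is not dominated by the stated form for either choice of permutation (the inequality $\|v_a\|_{\ell^1}\,\|v_b\|_{\ell^\infty}^{1/(1+\eta)}\|v_b\|_{\ell^1}^{\eta/(1+\eta)}\gtrsim \|v_b\|_{\ell^1}\,\|v_a\|_{\ell^\infty}^{1/(1+\eta)}\|v_a\|_{\ell^1}^{\eta/(1+\eta)}$ fails in general, e.g.\ for a spike versus a flat sequence).

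To be fair, the paper's own two-sentence proof ("Lemma~\ref{lem:F2} and interpolation") glosses over exactly the same point, and the defect is harmless everywhere the lemma is used, since in every application the four undistinguished arguments are the same sequence ($|v|$, $|w|$, etc.), so all permutations give the same constant and even the mixed interpolated bound coincides with the clean one. But as a proof of the lemma as stated, for general distinct $v_1,\dots,v_5$ and intermediate $q$, your last step has a gap: you would either need to strengthen \eqref{eq:F2sum1} so that the two $\ell^1$ indices of the first factor can be placed in prescribed slots compatible with both endpoints (which requires redoing the counting argument, not just citing it), or weaken the conclusion of part II to the interpolated mixed form, which is what the argument actually delivers and which still suffices for the paper.
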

\begin{proof}
Part I can easily be verified following the proof of Proposition~\ref{prop:nformest} with $\delta=0$. Part II follows from Lemma~\ref{lem:F2} and interpolation.
\end{proof}

\subsection{Estimate of $\D_{\bar v(k)}H_4^{\rm r1}$}  
Recall that
\[
H_4^{\rm r1} = i\sum_m |v(m)|^4,
\]
and hence
\[
\frac{\D H_4^{\rm r1}}{\D \bar v(k)} = 2i|v(k)|^2 v(k).
\]
We estimate the contribution of this term as 
\[
\Big\| \frac{\D H_4^{\rm r1}}{\D \bar v(\cdot)}\Big\|_{\ell^\infty} \les \|v^3\|_{\ell^{\infty}} \les \eps^{3/2},
\]
and
\[
\Big\| \frac{\D H_4^{\rm r1}}{\D \bar v(\cdot)}\Big\|_{\ell^1} \les\|v^3\|_{\ell^1} = \|v\|^3_{\ell^3} \les 
\eps^{\frac{3}{2}-1}.
\]
 
\subsection{Estimates for $\D_{\bar v(k)}\{H_4^{\rm r1},F_1\}$ and  $\D_{\bar v(k)}\{H_4^{\rm r1},F_2\}$.}
Let 
$$\tilde H_4^{\rm r1}(v_1,v_2,v_3,v_4):=\sum_n v_1(n)v_2(n)v_3(n)v_4(n).$$
We use duality as in \eqref{eq:dual}. Note that $\sum_k |\D_{\bar v(k)}\{H_4^{\rm r1},F_1\}| |h(k)|$
is bounded by  the sum of the following two terms  
$$\tilde H_4^{\rm r1}(|f_1(|v|,|v|,|v|)|,|h|,|v|,|v|),\,\,\,\,\,\,\tilde H_4^{\rm r1}(|f_1(|h|,|v|,|v|)|,|v|,|v|,|v|),$$ 
 and similar terms obtained by permuting the arguments. The following estimates (with $p=1$ and $p=\infty$), which follow from the definition of $\tilde H_4^{\rm r1}$ and Lemma~\ref{lem:f1f2}, completes the analysis of $\D_{\bar v(k)}\{H_4^{\rm r1},F_1\}$:
\begin{align*}
\tilde H_4^{\rm r1}(|f_1(|v|,|v|,|v|)|,|h|,|v|,|v|) &\les \|h\|_{\ell^{p'}}\|v\|_{\ell^p}\|v\|_{\ell^\infty}\|f_1(|v|,|v|,|v|)\|_{\ell^\infty}\\
&\les \|h\|_{\ell^{p'}}\|v\|_{\ell^p}\|v\|_{\ell^\infty}^2 \|v\|_{\ell^{\infty-}}^2  \les \eps^{\frac52-\frac1p-}.
\end{align*}
\begin{align*}
\tilde H_4^{\rm r1}(|f_1(|h|,|v|,|v|)|,|v|,|v|,|v|) &\les \|f_1(|h|,|v|,|v|)\|_{\ell^{p'}}\|v\|_{\ell^p}\|v\|_{\ell^\infty}^2 \\
&\les \|h\|_{\ell^{p'}}\|v\|_{\ell^p}\|v\|_{\ell^\infty}^2 \|v\|_{\ell^{\infty-}}^2  \les \eps^{\frac52-\frac1p-}.
\end{align*}
We estimate $\D_{\bar v(k)}\{H_4^{\rm r1},F_2\}$ similarly. The estimates below imply the required bound
\begin{align*}
\tilde H_4^{\rm r1}(|f_2(|v|,\ldots,|v|)|,|h|,|v|,|v|) &\les \|h\|_{\ell^{p'}}\|v\|_{\ell^p}\|v\|_{\ell^\infty}\|f_2(|v|,\ldots,|v|)\|_{\ell^\infty}\\
& \les \|h\|_{\ell^{p'}}\|v\|_{\ell^p}\|v\|_{\ell^\infty}^2  
  \|v\|_{\ell^1}
 \|v\|_{\ell^{\infty}}^{\frac{3}{1+\eta}}\|v\|_{\ell^1}^{\frac{3\eta}{1+\eta}} 
 \\& \les \eps^{\frac52-\frac1p-}.
\end{align*}
\begin{align*}
\tilde H_4^{\rm r1}(|f_2(|h|,|v|,|v|,|v|,|v|)|,|v|,|v|,|v|) &\les \|f_2(|h|,\ldots,|v|)\|_{\ell^{p'}}\|v\|_{\ell^p}\|v\|_{\ell^\infty}^2 \\
&\les \|h\|_{\ell^{p'}}\|v\|_{\ell^1}\|v\|_{\ell^p}\|v\|_{\ell^\infty}^2 \|v\|_{\ell^{\infty}}^{\frac{3}{1+\eta}}\|v\|_{\ell^1}^{\frac{3\eta}{1+\eta}}\\
&   \les \eps^{\frac52-\frac1p-}.
\end{align*}
In both estimates, the last inequality is obtained by  taking $\eta$ sufficiently small.

\subsection{Estimate of $\D_{\bar v(k)} \{H_4^{\rm nr},F_1\}^{\rm r}$.}
Based on the calculations in Section~\ref{sec:f1f2}, we have
\[
\{H_4^{\rm nr},F_1 \}^{\rm r}=2 i \sum_{\stackrel{m_4,m_5\neq m_1,\,\,m_2,m_3\neq m_6}{l(m)=0,\,\,q(m)=0}} \frac{v(m_1)v(m_2)v(m_3)\bar v(m_4)\bar v(m_5)\bar v(m_6)}
{(m_2-m_6)(m_3-m_6)} -c.c.
\]
Using duality as above we need to estimate 6 terms of the form
\begin{equation}\label{eq:h4nrr}
\sum_{\stackrel{m_4,m_5\neq m_1,\,\,m_2,m_3\neq m_6}{l(m)=0,\,\,q(m)=0}}\frac{v_1(m_1)v_2(m_2)v_3(m_3) v_4(m_4)v_5(m_5)v_6(m_6)}
{|m_2-m_6||m_3-m_6|},
\end{equation}
where $v_j=|h|$ and others are $|v|$. The required estimates follow from the following claim:
For any $\eta>0$, for any permutation $(j_1,j_2,j_3)$ of $\{1,4,5\}$, and for any permutation $(n_1,n_2,n_3)$ of $\{2,3,6\}$, we have
\begin{align*}
\eqref{eq:h4nrr}\les \|v_{j_1}\|_{\ell^\infty} \|v_{j_3}\|_{\ell^1}
\|v_{n_1}\|_{\ell^\infty} \|v_{n_3}\|_{\ell^1}\big(\|v_{j_2}\|_{\ell^\infty}\|v_{n_2}\|_{\ell^\infty}\big)^{\frac{1}{1+\eta}}
\big(\|v_{j_2}\|_{\ell^1}\|v_{n_2}\|_{\ell^1}\big)^{\frac{\eta}{1+\eta}}.
\end{align*}
As in the proof of Lemma~\ref{lem:F2}, see \eqref{eq:F2tilde}, the claim follows from an estimate for   the following sum
\begin{equation}\label{eq:h4nrr1}
\sum_{\stackrel{m_4,m_5\neq m_1,\,\,m_2,m_3\neq m_6}{l(m)=0,\,\,q(m)=0}}\frac{v_1(m_1)_2v(m_2)v_3(m_3) v_4(m_4)v_5(m_5)v_6(m_6)}
{|m_2-m_6|^{1+\eta}|m_3-m_6|^{1+\eta}}.
\end{equation}
First we replace $j_1$ in the equation $q(m)=0$ using $l(m)=0$. By symmetry it suffices to   consider two cases $j_1=1$, $j_1=4$. In the former case we have
\begin{align*}
0&=(m_2+m_3-j_2-j_3-m_6)^2+m_2^2+m_3^2-j_2^2-j_3^2-m_6^2\\
&=-2j_2(m_2+m_3-j_3-m_6)+(m_2+m_3-j_3-m_6)^2+m_2^2+m_3^2-j_3^2-m_6^2.
\end{align*}
Moreover, $m_2+m_3-j_3-m_6\neq 0$ since $m_1\neq m_4, m_5$. Therefore, both $j_1$ and $j_2$ are determined by the remaining indices. This implies that
\begin{align*}
\eqref{eq:h4nrr1}&\les \|v_{j_1}\|_{\ell^\infty} \|v_{j_2}\|_{\ell^\infty}\|v_{j_3}\|_{\ell^1}
\sum_{m_2,m_3\neq m_6}\frac{v(m_2)v_3(m_3)v_6(m_6)}
{|m_2-m_6|^{1+\eta}|m_3-m_6|^{1+\eta}}\\
&\les \|v_{j_1}\|_{\ell^\infty} \|v_{j_2}\|_{\ell^\infty}\|v_{j_3}\|_{\ell^1} \|v_{n_1}\|_{\ell^\infty} \|v_{n_2}\|_{\ell^\infty}\|v_{n_3}\|_{\ell^1},
\end{align*}
which leads to the desired estimate as in the previous sections.
The case $j_1=4$ is similar, the only difference is that $j_2$ is determined as roots of a quadratic polynomial instead of a linear one.

\subsection{Estimate of $ \D_{\bar v(k)}  \,  g_{F_2}^b g_{F_1}^a(H_4)$.}\label{sec:gf2h4}

The bounds for $\D_{\bar v(k)}  \,  g_{F_2}^b g_{F_1}^a(H_4)$ will be obtained inductively. Although we only need to consider the cases when  $a+b\geq 2$, we start with the case $a=1,b=0$ for clarity.
Note that $g_{F_1}^1(H_4)$ is a sum of terms of the form
$$H_4(v_1,v_2,v_3,v_4)=\sum_{n_1-n_2+n_3-n_4=0}v_1(n_1)  v_2(n_2)v_3(n_3) v_4(n_4)$$
where one of $v_i$'s is $f_1$ or $\bar f_1$ and the others are $v$ or $\bar v$.
To estimate $\big\|\D_{ \bar v(k)}   g_{F_1}^1(H_4)\big\|_{\ell^p}$, we use duality as before:
\ba\label{eq:duality}\Big\|\frac{\D}{\D \bar v(k)}   g_{F_1}^1(H_4)\Big\|_{\ell^p} \leq \sup_{\|h\|_{p^\prime}=1} \sum_k \Big|\frac{\D}{\D \bar v(k)}   g_{F_1}^1(H_4)\Big| |h(k)|.\ea
Note that the sum in the right hand side of \eqref{eq:duality} is bounded by  the sum of the following two terms  
\begin{align*}
 H_4(|f_1(|v|,|v|,|v|)|,|h|,|v|,|v|),\,\,\,\,\,\,\,\,\,\,\,
 H_4(|f_1(|h|,|v|,|v|)|,|v|,|v|,|v|)
\end{align*}
 and similar terms obtained by permuting the arguments.
The following lemma will be used to estimate these terms and the ones appearing in the higher order commutators. 

\begin{lem} \label{lem:induc} 
For any $q\in[1,\infty]$ and any permutation $(i_1,i_2,i_3,i_4)$ of $(1,2,3,4)$, we have
$$
|H_4(v_1,v_2,v_3,v_4)|\leq \|v_{i_1}\|_{\ell^q}\|v_{i_2}\|_{\ell^{q^\prime}}\|v_{i_3}\|_{\ell^1}\|v_{i_4}\|_{\ell^1}.$$
\end{lem}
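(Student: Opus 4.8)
The plan is to recognize $H_4$ as a pairing of two convolution sums and then apply Young's (and, at one endpoint, Hölder's) inequality; the resonance relation $n_1-n_2+n_3-n_4=0$ is exactly what makes this work. First I would pass to absolute values, writing $w_i=|v_i|\ge 0$, so that
\[
|H_4(v_1,v_2,v_3,v_4)|\le \sum_{n_1-n_2+n_3-n_4=0} w_1(n_1)\,w_2(n_2)\,w_3(n_3)\,w_4(n_4).
\]
Fix a permutation $(i_1,i_2,i_3,i_4)$ of $(1,2,3,4)$. Then $\{i_1,i_2\}$ and $\{i_3,i_4\}$ split the index set into two pairs. Since $n_1,n_3$ enter the resonance condition with a plus sign and $n_2,n_4$ with a minus sign, for each of the three possible pairings $\{12\,|\,34\}$, $\{13\,|\,24\}$, $\{14\,|\,23\}$ one can rewrite the condition as $\sigma_{i_1}n_{i_1}+\sigma_{i_2}n_{i_2}=-\sigma_{i_3}n_{i_3}-\sigma_{i_4}n_{i_4}=:k$ for suitable signs $\sigma_j\in\{+1,-1\}$.

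Summing over the internal variables with $k$ held fixed then gives $|H_4|\le\sum_k A(k)\,B(k)$, where $A$ is the convolution of $w_{i_1}$ and $w_{i_2}$ up to reflections $n\mapsto -n$ of whichever argument carries a minus sign, and likewise $B$ for $w_{i_3},w_{i_4}$. Reflection preserves every $\ell^p$ norm, so Young's inequality $\|f*g\|_{\ell^r}\le\|f\|_{\ell^a}\|g\|_{\ell^b}$ with $1+\frac1r=\frac1a+\frac1b$ gives $\|A\|_{\ell^\infty}\le\|w_{i_1}\|_{\ell^q}\|w_{i_2}\|_{\ell^{q'}}$ (the case $r=\infty$, i.e.\ Hölder) and $\|B\|_{\ell^1}\le\|w_{i_3}\|_{\ell^1}\|w_{i_4}\|_{\ell^1}$ (the case $a=b=r=1$). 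Finally $\sum_k A(k)B(k)\le\|A\|_{\ell^\infty}\|B\|_{\ell^1}$, and since $\|w_i\|_{\ell^p}=\|v_i\|_{\ell^p}$ this is precisely the asserted bound, valid for every $q\in[1,\infty]$, endpoints included.

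I do not expect a genuine obstacle: the only points needing a little care are the bookkeeping that the resonance condition splits across any of the three pairings --- immediate from the sign pattern of the four indices --- and tracking the harmless reflections $n\mapsto -n$ that convert index differences into honest convolutions. Everything else is a one-line application of Young's inequality.
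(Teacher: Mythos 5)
Your proof is correct and follows essentially the same route as the paper's: both exploit the convolution structure forced by the resonance condition $n_1-n_2+n_3-n_4=0$ and conclude with H\"older's and Young's inequalities. The only cosmetic difference is the grouping --- the paper pairs $v_{i_1}$ against the triple convolution $v_{i_2}*v_{i_3}*v_{i_4}$ (H\"older with exponents $q,q'$, then Young), whereas you split $2+2$ and use the $\ell^\infty$--$\ell^1$ pairing of two double convolutions; both yield the stated bound for every permutation and every $q\in[1,\infty]$.
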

\begin{proof} Note that for any permutation we can write
$$H_4(v_1,v_2,v_3,v_4) = \sum_j v_{i_1}(j)\,\,v_{i_2}*v_{i_3}*v_{i_4}(j).$$
The statement follows from H\"older's and Young's inequalities.
\end{proof}

Using  Lemma~\ref{lem:induc} and Lemma~\ref{lem:f1f2}, we obtain 
\begin{align*}
H_4(|f_1(|v|,|v|,|v|)|,|h|,|v|,|v|)&\les \|h\|_{\ell^{p^\prime}}\|v\|_{\ell^p}\|v\|_{\ell^1} \|f_1(|v|,|v|,|v|)\|_{\ell^1}\\
&\les \|h\|_{\ell^{p^\prime}}\|v\|_{\ell^p}\|v\|_{\ell^1} \|v\|_{\ell^1} \|v\|_{\ell^{\infty-}}^2 \\
& \les \|h\|_{\ell^{p^\prime}} \eps^{\frac12-\frac1p-}.
\end{align*}
Similarly, we have
\begin{align*}
H_4(|f_1(|h|,|v|,|v|)|,|v|,|v|,|v|)&\les \|f_1(|h|,|v|,|v|)\|_{\ell^{p^\prime}}\|v\|_{\ell^p}\|v\|_{\ell^1}^2 \\
&\les \|h\|_{\ell^{p^\prime}}\|v\|_{\ell^{\infty-}}^2 \|v\|_{\ell^p}\|v\|_{\ell^1}^2 \\
&\les \|h\|_{\ell^{p^\prime}} \eps^{\frac12-\frac1p-}.
\end{align*}
Similar bounds follow for the terms obtained by permuting the arguments. Therefore we have
$$\Big\|\frac{\D}{\D \bar v(k)}   g_{F_1}^1(H_4)\Big\|_{\ell^p}\les \eps^{\frac12-\frac1p-}.$$
Note that this gives an error of order 1 when $p=2$. This explains why we consider higher order commutators and a second normal form transform (see   footnote~\ref{fn:gf1h}).
This proof motivates the following generalization:
 \begin{lem}\label{lem:N}
 Consider $H_4(|v|,|v|,|v|,|v|)$. Repeatedly ($a$ times) replace one of the $v$'s with $f_1(|v|,|v|,|v|)$. Then  repeatedly ($b$ times) replace one of the $v$'s with $f_2(|v|,|v|,|v|,|v|,|v|)$. Finally, replace one of the $v$'s with $h$. We denote any such function by $H_{4,a,b}(f_1,f_2,h,v)$. Then, for  $p=1$ and $p=\infty$, we have
\begin{align*}
|H_{4,a,b}(f_1,f_2,h,v)|
&\les \|h\|_{\ell^{p^\prime}} \eps^{a+b-\frac12-\frac1p-}.
\end{align*}
\end{lem}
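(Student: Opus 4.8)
The plan is to argue by induction on $a+b$, using the estimates already established (Lemma~\ref{lem:f1f2}, Lemma~\ref{lem:induc}) as the building blocks, and tracking how each substitution changes the power of $\eps$. The base case $a+b=0$ — that is, $H_4(|v|,|v|,|v|,|h|)$ — follows directly from Lemma~\ref{lem:induc} by putting $h$ in an $\ell^{p'}$ slot, one $v$ in an $\ell^p$ slot, and the remaining two $v$'s in $\ell^1$; since $\|v\|_{\ell^1}\les\eps^{-1/2}$, $\|v\|_{\ell^\infty}\les\eps^{1/2}$, and $\|v\|_{\ell^p}\les\eps^{1/2-1/p}$ for $v\in B^\delta_{\eps,C}$, we get $|H_4(|v|,|v|,|v|,|h|)|\les\|h\|_{\ell^{p'}}\eps^{1/2-1/p+(-1)}=\|h\|_{\ell^{p'}}\eps^{-1/2-1/p-}$, matching the claim with $a=b=0$. (The worked case $a=1,b=0$ in the text is exactly the next inductive step.)

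For the inductive step, suppose the bound holds for all $H_{4,a',b'}$ with $a'+b'<a+b$, and consider $H_{4,a,b}(f_1,f_2,h,v)$. The key observation is that each factor $f_1(|v|,|v|,|v|)$ or $f_2(|v|,\dots,|v|)$ sitting in one of the four slots of $H_4$ can, via Lemma~\ref{lem:f1f2}, be estimated in whichever norm $\ell^1$ or $\ell^\infty$ is convenient, at the cost of $\|v\|_{\ell^{\infty-}}^2\les\eps^{1-}$ (for $f_1$) or $\|v\|_{\ell^1}\|v\|_{\ell^\infty}^{3/(1+\eta)}\|v\|_{\ell^1}^{3\eta/(1+\eta)}\les\eps^{1-}$ (for $f_2$, taking $\eta$ small) — i.e., each inner factor contributes exactly one power of $\eps^{1-}$ beyond the norm of its own leading slot. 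Concretely, I would apply Lemma~\ref{lem:induc} to the outer $H_4$: place $h$ (or the $f$-factor containing $h$, if $h$ is nested) in an $\ell^{p'}$-type slot, one remaining factor in an $\ell^p$-type slot, and the other two in $\ell^1$; then recursively expand each $f_1$- or $f_2$-factor using Lemma~\ref{lem:f1f2} (part I or II respectively) in the norm dictated by its slot. Since $h$ itself always occupies a single leading $\ell^{p'}$ slot somewhere in the tree (possibly inside an $f$), it pulls out as $\|h\|_{\ell^{p'}}$ via part I/II of Lemma~\ref{lem:f1f2} with the distinguished index $i$ chosen to be the slot of $h$. Counting powers: the bare $H_4$ with only $v$'s and one $h$ gives $\eps^{-1/2-1/p-}$ as in the base case, and each of the $a+b$ substitutions multiplies by $\eps^{1-}$, yielding $\eps^{a+b-1/2-1/p-}$ — wait, we want $\eps^{a+b-1/2-1/p-}$, which is exactly the stated exponent once one reconciles the $-$ bookkeeping. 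So the induction closes.

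The main obstacle is bookkeeping rather than any genuine analytic difficulty: one must check that for every way of building $H_{4,a,b}$ — that is, for every nesting tree of $f_1$'s and $f_2$'s inside the four slots of $H_4$ and every position of $h$ within it — there is a consistent choice of $\ell^1$ versus $\ell^\infty$ for each leaf and internal node so that (i) $h$ ends up in an $\ell^{p'}$ slot at its own level, (ii) at each application of Lemma~\ref{lem:induc} exactly two slots are taken in $\ell^1$ and the Young/Hölder splitting is legal, and (iii) at each application of Lemma~\ref{lem:f1f2} the required permutation of indices exists (which it does, since that lemma allows \emph{any} choice of the distinguished slot $i$). The only subtlety is that Lemma~\ref{lem:f1f2}(II) needs the small parameter $\eta$, so after the finitely many substitutions one picks $\eta$ small enough that all the accumulated $\eta$-losses are absorbed into the $\eps^{-}$; since $a+b$ is bounded (we only ever need $a+b\le$ a fixed small number, as $k$ in the Taylor expansion \eqref{eq:taylor} is $2$ and then $1$), this is harmless. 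I would therefore state the induction cleanly with a single "type" bookkeeping — each node is tagged $\ell^1$ or $\ell^\infty$, two children of each $H_4$-node are $\ell^1$, the $h$-branch is $\ell^{p'}$ — verify the base case and one generic inductive step in detail, and remark that the remaining cases (different positions of $h$, $f_2$ in place of $f_1$) are identical.
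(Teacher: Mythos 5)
Your proposal is correct and is essentially the paper's own argument: the paper applies Lemma~\ref{lem:induc} once at the top to split $H_4$ into $\ell^{p'}\times\ell^p\times\ell^1\times\ell^1$ slots, and packages your ``each $f_1$ or $f_2$ insertion costs $\eps^{1-}$'' step as the composition estimate \eqref{eq:frep} (obtained by repeated use of Lemma~\ref{lem:f1f2}), together with a short induction to pull $\|h\|_{\ell^{p'}}$ out of the branch containing $h$. Your induction on $a+b$ with the tree bookkeeping is just a different packaging of the same ingredients and the same power count, so no further comment is needed.
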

\begin{proof} First by using Lemma~\ref{lem:f1f2} repeatedly (with sufficiently small $\eta$, we see that any composition of $f_1$'s and $f_2$'s satisfy
\ba\label{eq:frep}
\|\cdot\|_{\ell^q}\les \|v\|_{\ell^q} \|v\|_{\ell^{\infty-}}^{2a}\Big[\|v\|_{\ell^1}^{1+ }\|v
\|_{\ell_\infty}^{3-}\Big]^b, 
\ea
where $a$ is the number of $f_1$'s and $b$ is the number of $f_2$'s appearing in the composition.

Now, note that $H_4$ has four arguments. Let $a_j$ (resp. $b_j$) be the number of $f_1$'s (resp. $f_2$'s) appearing in the $j$th argument. 
Only one of the arguments contains $h$, say the first one. Using Lemma~\ref{lem:induc}, we have
$$
|H_4(v_1,v_2,v_3,v_4)| \les \|v_{1}\|_{\ell^{p^\prime}}\|v_{2}\|_{\ell^p }\|v_{3}\|_{\ell^1}\|v_{4}\|_{\ell^1}.
$$
Using \eqref{eq:frep}, we have
\ba\nonumber
\|v_{2}\|_{\ell^p }\|v_{3}\|_{\ell^1}\|v_{4}\|_{\ell^1}\les \|v \|_{\ell^p }\|v \|_{\ell^1}\|v \|_{\ell^1}\|v\|_{\ell^{\infty-}}^{2(a_2+a_3+a_4)}\Big[\|v\|_{\ell^1}^{1+}\|v
\|_{\ell_\infty}^{3-}\Big]^{b_2+b_3+b_4}
\ea

Next, note that $v_1$ is either $|h|$ (in which case we stop) or $f_1(v_{1,1},v_{1,2},v_{1,3})$ or $f_2(v_{1,1},\ldots,v_{1,5})$. 
In the latter cases, without loss of generality, $v_{1,1}$  contains $|h|$. We estimate,   using \eqref{eq:frep} and a simple induction,
\begin{align}\nonumber
\|v_{1}\|_{\ell^{p^\prime}} \les \|h\|_{\ell^{p^\prime}}\|v \|_{\ell^{\infty-}}^{2a_1}\Big[\|v\|_{\ell^1}^{1+}\|v
\|_{\ell_\infty}^{3-}\Big]^{b_1}.
\end{align}
Combining these estimates we obtain
\begin{align*}
|H_{4,a,b}(f_1,f_2,h,v)| &\les \|h\|_{\ell^{p^\prime}} \|v\|_{\ell^p }\|v \|_{\ell^1}\|v \|_{\ell^1}
\|v \|_{\ell^{\infty-}}^{2a}\Big[\|v\|_{\ell^1}^{1+}\|v
\|_{\ell_\infty}^{3-}\Big]^{b}\\
&\les \|h\|_{\ell^{p^\prime}} \eps^{a+b-\frac12-\frac1p-}.
\end{align*}
\end{proof}
Using duality as above we see that the right hand side of \eqref{eq:duality} for $ \D_{ \bar v(k)}   g_{F_2}^b g_{F_1}^a(H_4)$ can be bounded by a finite sum of functions
$H_{4,a,b}(f_1,f_2,h,v)$. Therefore, Lemma~\ref{lem:N} implies that 
$$\Big\|\frac{\D g_{F_2}^b g_{F_1}^a(H_4)}{\D \bar v(k)}   \Big\|_{\ell^p} \les \eps^{a+b-\frac12-\frac1p-} \les \eps^{ \frac32-\frac1p-},\,\,\,\text{ if }a+b\geq 2.$$

\subsection{Remainder Estimates}
It remains to estimate the terms involving integrals.
Note that it suffices to prove the inequalities
\begin{align*}
&\sup_{s\in [0,1]}\Big\| \frac{\D}{\D \bar v(k)} g_{F_1}^3(H)\circ X_{F_1}^s
\Big\|_{\ell^p}\les \eps^{\frac32-\frac1p-},\\
&
\sup_{s\in [0,1]}\Big\| \frac{\D}{\D \bar v(k)} \{g_{F_1}^3(H)\circ X_{F_1}^s,F_2\}\Big\|_{\ell^p}\les \eps^{\frac32-\frac1p-},\\
&\sup_{s\in [0,1]}\Big\|\frac{\D}{\D \bar v(k)} 
g_{F_2}^{2}(H\circ X_{F_1}^1)\circ X_{F_2}^s\Big\|_{\ell^p}\les \eps^{\frac32-\frac1p-}
\end{align*}
for $p=1,\infty$ assuming that $\|v\|_{\ell^p}\les \eps^{\frac12-\frac1p}, p\in[1,\infty]$. 
Since we have to estimate the composite function derivative, we first
study the bounds on the derivatives of $X_{F_j}^s(v)$, $j=1, 2$, $s\in[0,1]$, more precisely, let 
$
w(m)=[X_{F_j}^s(v)](m),
$
which is the solution at $t=s$ of the system
\[
\frac{d w(m)}{d t}=\frac{\partial F_j}{\partial \bar w(m)}, \,\,\,\,\,\,w|_{t=0}= v.
\]
Differentiating this equation with respect to initial condition
$ w(n)|_{t=0}= v(n)$ and using the notation $D_n$, we see that $\big|\frac{d}{dt}D_n w(m) \big|$ is bounded by a sum of terms of the form
\begin{align*}&f_1(v_1,v_2,v_3)(m),\,\,\,\,\text{ for } j=1, \\
&f_2(v_1,v_2,v_3,v_4,v_5)(m),\,\,\,\,\text{ for } j=2,
\end{align*}
where one of the $v_k$'s is $|D_n w|$ and the others are $|w|$. Without loss of generality we can assume that 
$v_1=|D_nw|$. 
We have a similar formula for $\frac{d}{ds}D_n \bar w$.
Note that at $s=0$, we have
$$\Big\| D_nw(m) \Big\|_{\ell^\infty_m\ell^1_n}=\Big\| D_nw(m) \Big\|_{\ell^\infty_n\ell^1_m } =1.$$
We will prove that both of these norms remain bounded for $s\in[0,1]$. 
Taking the $\ell^\infty_m\ell^1_n$ norm of $f_j$ 
we obtain 
\begin{align*}  
\Big\|\frac{d}{dt}   D_n w(m)\Big\|_{\ell^\infty_m\ell^1_n} & \les \big \|f_j(|D_n w|,\ldots,|w|)(m)\big\|_{\ell^\infty_m\ell^1_n}  \\ 
&\leq \big \|f_j(\|D_n w\|_{\ell^1_n},\ldots,|w|)(m)\big\|_{\ell^\infty_m}  
\\
& \les \|D_n w(m)\|_{\ell^\infty_m\ell^1_n} \eps^{1-}.
\end{align*}
In the last line, we used Lemma~\ref{lem:f1f2} (for sufficiently small $\eta$).
This implies that (with $w(m)=[X_{F_j}^s(v)](m)$, $j=1,2$)
\ba\label{eq:deriv}
\sup_{0\leq s\leq 1} \Big\| D_n w(m)\Big\|_{\ell^\infty_m\ell^1_n} \les 1.
\ea
Similarly, we obtain
\ba\label{eq:deriv1}
\sup_{0\leq s\leq 1} \Big\| D_n w(m)\Big\|_{\ell^\infty_n\ell^1_m} \les 1.
\ea
We also need the following estimates for the higher order derivatives of $w=X^s_{F_1}(v)$ with respect to the initial conditions:
\begin{align*}
\|D_jD_nw(k)\|_{\ell^\infty_{j,n}\ell^1_k }&\les \eps^{\frac12-},\,\,\,\,\,\,\,
\|D_jD_nw(k)\|_{\ell^\infty_{k,n}\ell^1_j }\les \eps^{\frac12-}, \\
\|D_jD_mD_nw(k)\|_{\ell^\infty_{j,m,n}\ell^1_k }&\les 1,\,\,\,\,\,\,\,
\|D_jD_mD_nw(k)\|_{\ell^\infty_{k,m,n}\ell^1_j }\les 1,
\end{align*}
which can be obtained using Lemma~\ref{lem:f1f2} as in the proof of \eqref{eq:deriv}, \eqref{eq:deriv1}.

\begin{rmk}
For $\delta>0$, a similar argument implies  
\[
\Big\|e^{\delta|n-m|} D_n w(m)\Big\|_{\ell^\infty_m\ell^1_n}\les 1,\,\,\,\,\,\,\,\, 
\Big\|e^{\delta|n-m|} D_n w(m) \Big\|_{\ell^\infty_n\ell^1_m}  \les 1,
\]
and for higher order derivatives of $w=X^s_{F_1}(v)$ we have
\begin{align*}
\Big\|e^{\delta|j_1+\cdots +j_k-m|} D_{j_1}\ldots D_{j_k} w(m)\Big\|_{\ell^\infty_{j_1,\ldots,j_k}\ell^1_m}&\les 1,\\
\Big\|e^{\delta|j_1+\cdots +j_k-m|} D_{j_1}\ldots D_{j_k} w(m) \Big\|_{\ell^\infty_{m,j_2,\ldots,j_k} \ell^1_{j_1}}  &\les 1.
\end{align*}
The rest of the argument follows as in other sections.
\end{rmk}

\subsubsection{Estimation of $\D_{\bar v(k)}g_{F_1}^3(H)\circ X^s_{F_1}(v) $.} \label{sec:subsub}
Since $g_{F_1}(\Lambda_2) = -H_4^{\rm nr}$, it suffices to estimate 
\[
\sup_{s\in [0,1]}\Big\| \frac{\D}{\D \bar v(k)} g_{F_1}^a(H_4)\circ X^s_{F_1}(v)
\Big\|_{\ell^p}, \,\,\,\,a=2,3.\]
When $a=2$, we estimate this expression rather than the one containing $H_4^{\rm nr}$
(as we should have) because it simplifies the notation and still implies
the estimate for the required expression. Note that
\begin{align*} 
 \Big\| \frac{\D}{\D \bar v(k)} g_{F_1}^a(H_4)\circ X^s_{F_1}(v) \Big\|_{\ell^1_k} &\leq  \Big\| \sum_i
\frac{\D g_{F_1}^a (H_4)}{\D \bar w(i)} \frac{\D \bar w(i)}{\D \bar v(k)}\Big\|_{\ell^1_k} + \|c.c.\|_{\ell^1_k}\\
&\leq \Big\|\frac{\D g_{F_1}^a (H_4)}{\D \bar w(i)}\Big\|_{\ell^1_i} \Big\| D_k \bar w(i)\Big\|_{\ell^\infty_i\ell^1_k}+ \|c.c.\|_{\ell^1_k} \\
&\les \eps^{\frac12-},
\end{align*} 
 for   $a\geq 2$ by \eqref{eq:deriv} and the estimates we obtained in Subsection~\ref{sec:gf2h4}. Similarly, using \eqref{eq:deriv1}, we obtain
\begin{align*} 
\Big\|\frac{\D}{\D \bar v(k)} g_{F_1}^a(H_4)\circ X_{F_1}^s(v) \Big\|_{\ell^\infty_k}\les \eps^{\frac32-},\,\,\,\text{ for } a\geq 2.
\end{align*} 

The estimates for $\D_{\bar v(k)}\{g_{F_1}^3(H)\circ X_{F_1}^s,F_2\}  $ and 
$\D_{\bar v(k)}g_{F_2}^{2}(H\circ X_{F_1}^1)\circ X_{F_2}^s $ are similar. The only difference is that we also require the higher order derivative estimates of $w=X_{F_1}^s(v)$ listed above.
We omit the details.

\appendix

\section{Nonlinear fiber optics application}
One of the most important applications of NLS 
concerns light-wave communication systems, where optical pulses in a retarded time frame evolve 
according to the one dimensional NLS
\ba
iA_z + S d(z) A_{\tau \tau} + g(z) |A|^2 A = 0.
\ea
Here $z$ is the rescaled distance, $\tau$ is the rescaled retarded time, $A$ is the amplitude of 
the optical wave envelope,
$d(z)$ is the group velocity dispersion, which is usually piecewise constant, and $S$ is the dispersion strength 
parameter. Finally, $g(z)>0$ is 
the nonlinear coefficient which accounts for the losses and amplifications. 
 For the derivation of NLS from Maxwell's equations, one can consult 
many references, {\em e.g.} \cite{HasKod}.  It is a standard assumption 
that $d(z)$ and $g(z)$ are periodic.

In general, in light-wave communication,  the information is transmitted with localized pulses 
(with Gaussian or exponential tails) in allocated time slots. 
The presence of pulse corresponds to ``1'' and the absence of pulse corresponds to ``0''
in binary format.
Naturally, it is preferable that the incoming waveform would appear undistorted at the end of 
the transmission line. It can be achieved  by optimizing  
an individual pulse, so it would propagate without distortion, and sending such pulses together, keeping 
them sufficiently far apart ({\em i.e.} taking time slots sufficiently large), 
so they would not interact. 
Such regime is usually called ``soliton regime'' in the optical communication 
literature, where the word  ``soliton'' does not usually mean that 
the equation   is integrable. The pulses could be, for example, dispersion managed solitons, 
which are  approximately periodic localized solutions of the above equation. 
In other words, the main feature of the 
soliton regime is that the pulses do not interact (or rather pulse 
to pulse interaction is weak compared 
to the pulse self-interaction) during the propagation through the transmission line.

An alternative regime (often called the quasi-linear regime) has been found when 
the pulses strongly overlap during the transmission, 
see {\em e.g.} the survey paper \cite{Ess}. Surprisingly, it was observed that up to a   
linear transformation of the transmitted waveform, the pulses appeared undistorted.   Note that even though the
pulses spread over many time slots, the average optical energy ($L^2$ norm square) per bit 
does not change and therefore nonlinear effects remain strong. 
It is usually implicitly assumed  in the engineering literature that 
``nonlinearity gets averaged out'' due to the high frequency of the initial data. 

In this article, we rigorously explain the quasi-linear phenomenon  for a model problem  
when $d(z)$ and $g(z)$ are constant and all bits are occupied by $1's$, in 
the limit  of vanishing pulse width. 
This case (of all identical 1's) leads to the formulation with periodic boundary conditions. 
Although, this is a special case, we hope that our proof  can be extended to  the more general case: 
pseudo-random sequence of $1's$ and $0's$. 
Note that constant $d(z)$ and $g(z)$ assumption is not restrictive since if the evolution is quasi-linear on each interval where $d(z),g(z)$ are constant, then the evolution is quasi-linear on their union.

There has been previous work on quasi-linear regime. In \cite{ManZak}, 
the limit of the short pulse width  for dispersion managed NLS  on the real line is considered. 
An effective 
evolution equation was derived which turned out to be integrable and weakly nonlinear. The equation  was later improved 
in \cite{Abl}. 
On the real line the energy disperses to infinity and therefore nonlinearity becomes small. 
This leaves an open question: {\em what will happen if the energy does not disperse 
to infinity or in other words, there is an  infinite bit stream}. The problem considered in this paper models
this situation: nonlinearity remains strong which is due to the periodic boundary conditions.

Finally, we note that on $\Real$, the dispersion strength $S$ and  pulse width $\eps$ can be combined into 
a single effective  parameter $S/\eps^2$ by scaling $\tau$. This implies that the limits $S\rightarrow \infty$ 
and $\eps\rightarrow 0$ are equivalent. This is not the case in our model since
the characteristic $\tau$-scale, bit size,  is already present.
Therefore, the two parameter problem in $S, \eps$ 
should be considered. However, the limit $S\rightarrow \infty$ is insufficient to achieve 
quasi-linear evolution and must be supplemented with $\eps \rightarrow 0$. On the other hand, 
the limit $\eps \rightarrow 0$ does produce quasi-linear evolution with $S$ being fixed but arbitrary. 
This motivated us to consider only this case. We put  $S=1$ in order not to obscure the exposition. \\ \\
\begin{large}
{\bf Acknowledgment. \\}
\end{large}
The authors would like to thank Eugene Wayne for many helpful 
discussions.

\end{document}